\theoremstyle{plain}
\newtheorem{thm}{\protect\theoremname}
\theoremstyle{plain}
\newtheorem{prop}[thm]{\protect\propositionname}
\theoremstyle{plain}
\newtheorem{definition}{Definition}[section]
\theoremstyle{plain}
\newtheorem{num}[thm]{Numerical conjecture}
\providecommand{\propositionname}{Proposition}
\providecommand{\theoremname}{Theorem}
\begin{document}
\title{Optimizing entanglement distribution via noisy quantum channels}

\author{Piotr Masajada}
\affiliation{Institute of Fundamental Technological Research, Polish Academy of Sciences, \\ Pawi\'nskiego 5B, 02-106 Warsaw, Poland}

\author{Marco Fellous-Asiani}
\affiliation{Centre of New Technologies,
University of Warsaw, Banacha 2c, 02-097 Warsaw, Poland}

\author{Alexander Streltsov}
\email{streltsov.physics@gmail.com}
\affiliation{Institute of Fundamental Technological Research, Polish Academy of Sciences, \\ Pawi\'nskiego 5B, 02-106 Warsaw, Poland}

\begin{abstract}
    Entanglement distribution is a crucial problem in quantum information science, owing to the essential role that entanglement plays in enabling advanced quantum protocols, including quantum teleportation and quantum cryptography. We investigate strategies for distributing quantum entanglement between two distant parties through noisy quantum channels. Specifically, we compare two configurations: one where the entanglement source is placed at the midpoint of the communication line, and another where it is located at one end. For certain families of qubit channels we show analytically that the midpoint strategy is optimal. Based on extensive numerical analysis, we conjecture that this strategy is generally optimal for all qubit channels. Focusing on the midpoint configuration, we develop semidefinite programming (SDP) techniques to assess whether entanglement can be successfully distributed through the network, and to quantify the amount of entanglement that can be distributed in the process. In many relevant cases the SDP formulation reliably captures the maximal amount of entanglement which can be distributed, if entanglement is quantified using the negativity. We analyze several channel models and demonstrate that, for various combinations of amplitude damping and depolarizing noise, entanglement distribution is only possible with weakly entangled input states. Excessive entanglement in the input state can hinder the channel’s ability to establish entanglement. Our findings have implications for optimizing entanglement distribution in realistic quantum communication networks.
\end{abstract}

\maketitle

\section{Introduction}
Quantum entanglement~\cite{Horodeckientanglementreview} is a fundamental type of correlations between quantum particles, enabling communication protocols that surpass classical limits. As a uniquely quantum phenomenon, entanglement acts as a vital resource for tasks such as quantum teleportation~\cite{Benettteleportofstate}, superdense coding~\cite{Benettsuperdensecoding}, and quantum key distribution~\cite{BennettQKD}. Realizing the full potential of these applications requires the ability to distribute entanglement efficiently and reliably over long distances—an essential step toward scalable quantum networks and the quantum internet.

The aim of this work is to investigate realistic scenarios in which entanglement can be established between two distant parties. While entanglement distribution has been studied in various contexts, here we focus on a setting where a source generates entangled pairs, which are then distributed between two parties, Alice and Bob, through a noisy quantum channel. Specifically, we consider two configurations: one in which the source is placed midway between Alice and Bob, and another in which the source is co-located with one of the two parties. We then compare these setups to determine which yields more favorable outcomes. 

Most existing studies on entanglement distribution have focused on protocols where one party—typically Alice—locally prepares a two-particle entangled state and sends one of the particles to the other party (Bob) through a quantum channel~\cite{Palnotmest,Streltsovdistribution,Krisnandadistributionentanglement,Zuppardentanglementdistribution}, see also Fig.\ref{fig::inkaa}. Interestingly, it has been shown in~\cite{Palnotmest,Streltsovdistribution} that using a maximally entangled state is not always optimal for entanglement distribution. In certain scenarios, a non-maximally entangled state prepared by Alice can lead to a greater amount of entanglement shared between Alice and Bob after transmission. Additionally, protocols involving pre-shared correlations between the parties have been explored in~\cite{Streltsovdistributionentangpre,Chuand}.

The problem of determining whether entanglement can be distributed via a noisy quantum channel is closely related to the fundamental question of whether a given quantum state is entangled—a problem known to be computationally hard in general~\cite{Gurvitsnph}. Similarly, there is no simple characterization of the set of separable (i.e. non-entangled) states~\cite{Horodeckientanglementreview}. However, several sufficient criteria for detecting entanglement exist. One of the most well-known is the Peres-Horodecki criterion~\cite{Perespptcriterion,Horodeckipptcriterion}, which states that if the partial transpose of a state’s density matrix has at least one negative eigenvalue, the state is entangled. In the special cases of two qubit and qubit–qutrit systems, this criterion is also necessary—meaning it fully characterizes entanglement in those dimensions \cite{Horodeckipptcriterion}. However, in higher dimensions, there exist entangled states with a positive partial transpose (PPT)~\cite{HORODECKI1997333,Horodeckipptbound}. 

Many tasks in quantum information science require an optimization over the set of separable states. Because there is no simple description of this set, it is often approximated by the larger set of PPT states. Since every separable state is PPT, the set of separable states is strictly contained within the set of PPT states. Conversely, states that are not PPT—that is, those with a non-positive partial transpose—are referred to as NPT states.

The presence of negative eigenvalues in the partial transpose can be used to quantify entanglement through an entanglement monotone known as negativity~\cite{PhysRevA.58.883,Vidalnegat}, defined as: \begin{equation} 
N(\rho) = \sum_{\lambda_1 < 0} |\lambda_1|. \label{eq::neg} \end{equation} 
Here, $\lambda_1$ are the eigenvalues of the partial transpose of the state $\rho$. In two qubit and qubit–qutrit systems, negativity serves as a valid entanglement measure, meaning that it is nonnegative, and vanishes only on separable states. In this work, we focus on the two qubit case, where negativity simplifies further: it is equal to the absolute value of the single negative eigenvalue of the partial transpose~\cite{Sanperaonenegativeeigenvalue}. 

Leveraging semidefinite programming (SDP) techniques~\cite{Skrzypczykquantumsdp}, we develop methods to estimate whether a given channel configuration can establish long-distance entanglement. SDP is a powerful class of convex optimization problems with broad applications in quantum physics. Owing to the mathematical structure of density operators, many optimization tasks in quantum information theory can be naturally formulated as SDPs. One of the key advantages of SDP is its computational efficiency for numerical optimization. Moreover, each SDP has a corresponding dual problem, which can sometimes be exploited to obtain analytical solutions or tight bounds. Consequently, when a problem admits an SDP formulation, it becomes possible to either solve it analytically or obtain accurate numerical estimates.

In the following sections, we make extensive use of SDP techniques to address a range of problems concerning entanglement distribution. For numerical computations, we utilize the Python package CVXPY~\cite{diamond2016cvxpy,agrawal2018rewriting}, in combination with QuTiP~\cite{JOHANSSONqutippython} and NumPy~\cite{Harrisnumpy}.

This article is organized as follows: In Section~\ref{sec:2}, we investigate the optimal placement of the entanglement source, comparing scenarios where the source is located with one of the parties (Alice or Bob) versus positioned midway between them. We show that placing the source at the midpoint between the two parties generally leads to the most effective entanglement distribution. In Section~\ref{sec:3}, we investigate in more details the midway strategy. Our analysis reveals that, in certain noise regimes, only states with minimal initial entanglement are robust enough to remain entangled after transmission. We further explore scenarios involving different types of quantum channels, identifying conditions under which entanglement can still be generated. We conclude in Section~\ref{sec:Conclusions}.

\section{\label{sec:2} Role of the source location for entanglement distribution}
\begin{figure}
\begin{subfigure}{0.5\textwidth}
    \centering
    \includegraphics[width=\textwidth]{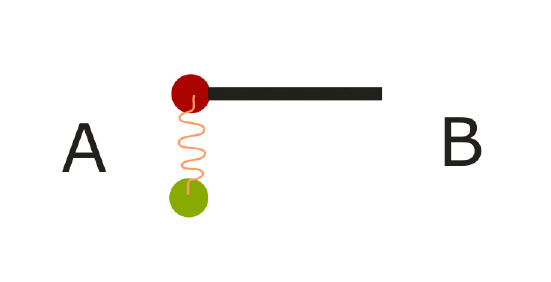}
    \caption{Standard entanglement distribution setup: Alice possesses two entangled particles (green and red circles). She sends one of them to Bob via a -- possibly noisy -- quantum channel (represented by a black line).} 
    \label{fig::inkaa}
\end{subfigure}
\vspace{1cm}
\begin{subfigure}{0.5\textwidth}
    \centering
    \includegraphics[width=\textwidth]{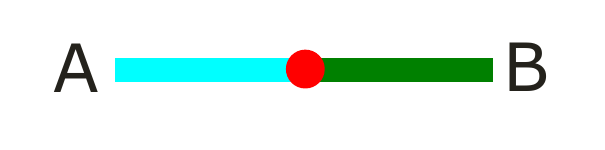}
    \caption{Entanglement distribution protocol considered in this paper: The objective is to establish entanglement between parties \( A \) and \( B \) by placing a source of entangled particles (indicated by the red dot) either midway between Alice and Bob or at one end of the channel. When the source is placed midway, one particle is transmitted to Alice through the blue segment of the channel, and the other to Bob through the green segment. If the blue segment acts as the quantum channel \( \Lambda_1 \), and the green segment as \( \Lambda_2 \), then the overall action of the channel in this configuration is given by \( \Lambda_1 \otimes \Lambda_2 \).}
    \label{fig::inkab}
\end{subfigure}
\begin{subfigure}{0.5\textwidth}
    \centering
    \includegraphics[width=\textwidth]{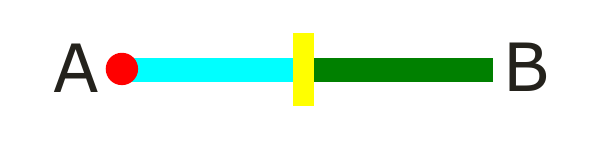}
    \caption{ If the source (red dot) is placed on Alice's side, the action of the channel is given by: \( \openone \otimes \Lambda_2 \Lambda_1. \) We also consider the possibility of inserting a filter (represented by a yellow rectangle) in the middle of the channel. In this case, the action of the overall channel becomes: \( \openone \otimes \Lambda_2 \, \Tilde{\Lambda} \, \Lambda_1.\)   }
    \label{fig::inkac}
\end{subfigure}
\caption{Distribution of entanglement in a different settings.}
\label{fig::strategiesofdistribution}
\end{figure}
We begin this section by recalling the concept of quantum channels. A noisy quantum channel is mathematically described by a completely positive trace-preserving (CPTP) map. Such maps can be represented in several equivalent ways. In this work, we employ two commonly used representations: the Kraus operator formalism \cite{Nielsenquantumcomputation} and the Choi–Jamiołkowski isomorphism \cite{CHOIch}. For a comprehensive overview of quantum channels, see \cite{gyongyosipropertieschannel}.

Suppose the goal is to distribute entanglement between two parties, Alice and Bob. The simplest and most widely studied approach involves Alice locally preparing a pair of entangled particles and sending one of them to Bob via a quantum channel~\cite{Palnotmest,Streltsovdistribution}. In this work, we will consider a more general strategy, see Fig. \ref{fig::strategiesofdistribution}. We assume the presence of an entangled source that emits pairs of entangled particles. Our goal is to determine the optimal position of the source for establishing entanglement between Alice and Bob. In particular, we compare two scenarios: placing the source at the midpoint between the parties and placing it at one end of the channel. When the source is located at the midpoint (Fig.~\ref{fig::inkab}), one particle travels through the left segment of the channel, and the other through the right. Denoting the action of the left and right segments by $\Lambda_1$ and $\Lambda_2$, respectively, the effective action of the full channel is given by:
\begin{equation}
\Lambda_\mathrm{midway} = \Lambda_1 \otimes \Lambda_2.
\label{eq::parallel}
\end{equation}
If the source is instead co-located with Alice (see Fig.~\ref{fig::inkac}), the action of the channel is described by
\begin{equation}
\Lambda_{\mathrm{Alice}} = \openone\otimes\Lambda_2 \circ \Lambda_1.
\label{eq::serialc}
\end{equation}
Analogously, co-locating the source with Bob gives the channel 
\begin{equation}
    \Lambda_{\mathrm{Bob}} = \Lambda_1 \circ \Lambda_2 \otimes \openone. \label{eq::serialr}
\end{equation}

We also allow the insertion of a filter between the left and right segments of the channel. In this case, the overall channel takes the form, (see Fig.~\ref{fig::inkac}):
\begin{equation}
\Lambda_f = \openone\otimes\Lambda_2 \circ\Tilde{\Lambda}\circ\Lambda_1.
\label{eq::serialf}
\end{equation}
$\Tilde{\Lambda}$ in the equation above is a completely-positive (CP) map which plays the role of the filter. 

Before proceeding, we introduce the following definitions.
\begin{definition}
    A bipartite quantum channel $\Lambda_1 \otimes \Lambda_2$ is said to be entanglement-annihilating (EA) if, for every bipartite input state $\ket{\psi}$, the output state $\Lambda_1 \otimes \Lambda_2(\ket{\psi}\bra{\psi})$ is separable.
\end{definition}

\begin{definition}{\label{de::eb}}
    A quantum channel $\Lambda$ is called entanglement-breaking (EB) if, for every bipartite input state $\ket{\psi}$, the output state $(\openone \otimes \Lambda)(\ket{\psi}\bra{\psi})$ is separable.
\end{definition}

Entanglement-annihilating channels have been investigated in~\cite{Filippovea, MoravcikovaEAandEBchannels, Filippovbipartiteeachannels}, while entanglement-breaking channels were studied in~\cite{horodecki2003entanglement, RuskaiEB, HolevoEBIN}.

Assume we are comparing two strategies for distributing entanglement between two parties. Strategy X uses the channel $\Lambda_X$, and strategy Y uses the channel $\Lambda_Y$. In both cases, the channels have a tensor product structure as described in equations~\eqref{eq::parallel} and~\eqref{eq::serialc}. We say that strategy Y is better than strategy X if the fact that $\Lambda_X$ preserves entanglement for some input state (i.e., it is not entanglement-annihilating) implies that $\Lambda_Y$ also preserves entanglement for some input state. In the following, we determine the best strategy for placing the entangled source.

It is known that the channels $\openone \otimes \Lambda$ and $\Lambda \otimes \openone$ are entanglement-annihilating (EA) if and only if $\Lambda$ is entanglement-breaking (EB)~\cite{Filippovea}. As a consequence, the strategy described in Eq.~\eqref{eq::serialc} is not EA if the composed channel $\Lambda_2 \circ \Lambda_1$ is not EB. Similarly, the strategy in Eq.~\eqref{eq::serialr} is not EA if $\Lambda_1 \circ \Lambda_2$ is not EB. 
Therefore, if we demonstrate that placing the source midway is better than placing it at Alice’s end, we effectively show that for some set of channels $\Lambda_1$ and $\Lambda_2$, the implication
\begin{equation}
    \Lambda_2 \circ \Lambda_1 \text{ is not EB} \Rightarrow \Lambda_1 \otimes \Lambda_2 \text{ is not EA}
    \label{eq:EBimpliesEA}
\end{equation}
holds. Additionally, we assume that when the source is at one edge of the channel, it is positioned at the left end of the communication channel. This is without loss of generality. In different terms, for the case the entanglement source is at one edge of the communication channel, we focus our analysis on the channel of the form $\openone \otimes \Lambda_2 \circ \Lambda_1$ for entanglement distribution, as indicated on Fig \ref{fig::inkac}.

\begin{definition}
    \label{def:Pauli}
    A Pauli channel is a channel which acts as a probabilistic mixture of Pauli matrices (including the identity). Pauli matrices are given by:
    \begin{equation}
\begin{aligned}
\sigma_x &= \begin{pmatrix}
0 & 1 \\
1 & 0
\end{pmatrix}, \quad
\sigma_y = \begin{pmatrix}
0 & -i \\
i & 0
\end{pmatrix}, \quad
\sigma_z = \begin{pmatrix}
1 & 0 \\
0 & -1
\end{pmatrix}
\end{aligned}
\label{eq::pauli}
\end{equation}
The Kraus operators of the Pauli channel are: 
$K^{\text{Pauli}}_0=\sqrt{p_0}\openone,$ $K^{\text{Pauli}}_1=\sqrt{p_1}\sigma_x,$ 
$K^{\text{Pauli}}_2=\sqrt{p_2}\sigma_y$ and $K^{\text{Pauli}}_3=\sqrt{p_3}\sigma_z.$ Each 
$p_i\in [0,1]$ with additional condition $p_0+p_1+p_2+p_3=1.$ 
\end{definition}

\begin{definition}
    A transposition of a quantum channel $\Lambda$ is a completely positive map having Kraus operators $K_i^{T},$ where $K_i$ are the Kraus operators of the channel $\Lambda.$ The exponent $T$ on $K_i^T$ denotes the transposition of the Kraus operator in some basis. We denote the resulting channel by $\Lambda^{T}.$ 
\end{definition}

Note that the transposition of a quantum channel does not depend on the choice of the Kraus representation of a quantum channel.

\begin{prop}\label{pr::unitarybettermidway}
Let $\Lambda_1$ be a unital quantum channel acting on a qubit, $\Lambda_2$ an arbitrary quantum qubit channel, and $\widetilde{\Lambda}$ a completely positive map (not necessarily trace preserving) that we call a filter. If the composed channel $\openone \otimes \Lambda_2 \circ \widetilde{\Lambda} \circ \Lambda_1$ preserves entanglement for some input state (i.e., it is not EA), then the product channel $\Lambda_1 \otimes \Lambda_2$ also preserves entanglement for some input state.
\end{prop}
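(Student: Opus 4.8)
The plan is to work directly (rather than through the contrapositive) and to exploit two special features of the hypotheses: that $\Lambda_1$ is a \emph{unital} qubit channel, and that the filter is an arbitrary CP map. First I would invoke the canonical form of unital qubit channels: every unital qubit channel factorizes as $\Lambda_1 = W_2 \circ P \circ W_1$, where $W_1,W_2$ are unitary channels and $P$ is a Pauli channel (Definition~\ref{def:Pauli}). Since local unitaries applied to the input or output of a bipartite channel do not change whether the output is entangled, I can absorb $W_1$ into a local unitary acting on the input, and absorb $W_2$ either into a local output unitary (in the product configuration $\Lambda_1\otimes\Lambda_2$) or into the filter, $\widetilde{\Lambda}\mapsto \widetilde{\Lambda}\circ W_2$, which stays CP (in the serial configuration). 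Thus it suffices to prove the statement with $\Lambda_1$ replaced by a bare Pauli channel $P$, renaming the modified filter again $\widetilde{\Lambda}$: namely, if $\openone\otimes(\Lambda_2\circ\widetilde{\Lambda}\circ P)$ is not EA, then $P\otimes\Lambda_2$ is not EA.

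The engine of the proof is the transpose invariance of Pauli channels. Because $\sigma_x^{T}=\sigma_x$, $\sigma_z^{T}=\sigma_z$ and $\sigma_y^{T}=-\sigma_y$, conjugation by the transposed Kraus operators reproduces the same channel, so $P^{T}=P$. Combined with the Jamio\l kowski identity $(\openone\otimes A)\ket{\Phi^+}=(A^{T}\otimes\openone)\ket{\Phi^+}$ for the maximally entangled state $\ket{\Phi^+}=\tfrac{1}{\sqrt{2}}(\ket{00}+\ket{11})$, this yields $(\openone\otimes P)(\ket{\Phi^+}\bra{\Phi^+})=(P\otimes\openone)(\ket{\Phi^+}\bra{\Phi^+})$. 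Now, since the serial channel is not EA, the composed qubit map $G=\Lambda_2\circ\widetilde{\Lambda}\circ P$ cannot be entanglement-breaking, so its Choi state $J(G)=(\openone\otimes G)(\ket{\Phi^+}\bra{\Phi^+})$ is entangled. I would then rewrite this Choi state by pushing $P$ onto the first subsystem with the identity above and commuting the filter (which acts on the second subsystem) past $P$ (now acting on the first), obtaining
\begin{equation}
J(G)=(P\otimes\Lambda_2)(\eta),\qquad \eta=(\openone\otimes\widetilde{\Lambda})(\ket{\Phi^+}\bra{\Phi^+}),
\end{equation}
where $\eta$ is the (unnormalized) Choi state of the filter.

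The final step is a convexity argument that converts this mixed input into a pure one. Writing the spectral decomposition $\eta=\sum_k q_k\ket{\phi_k}\bra{\phi_k}$ with $q_k\ge 0$ and using linearity gives $(P\otimes\Lambda_2)(\eta)=\sum_k q_k\,(P\otimes\Lambda_2)(\ket{\phi_k}\bra{\phi_k})$. If every term on the right were separable, the convex combination would be separable, contradicting the entanglement of $J(G)$. Hence at least one pure state $\ket{\phi_k}$ satisfies that $(P\otimes\Lambda_2)(\ket{\phi_k}\bra{\phi_k})$ is entangled, i.e.\ $P\otimes\Lambda_2$ — and therefore $\Lambda_1\otimes\Lambda_2$ — is not EA, as claimed.

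I expect the main obstacle to be the bookkeeping in the reduction step: justifying the canonical form for unital qubit channels and checking carefully that absorbing the pre- and post-unitaries leaves the EA property invariant in \emph{both} configurations simultaneously, while keeping the filter completely positive. The transpose invariance $P^{T}=P$ is exactly what legitimizes relocating $\Lambda_1$ from the second to the first subsystem, and this is precisely where the unitality assumption enters: a non-unital $\Lambda_1$ would carry a translation term in its Bloch representation and would not reduce to a transpose-invariant Pauli channel, so the argument would break down.
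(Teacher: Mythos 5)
Your proof is correct and follows essentially the same route as the paper's: the unitary--Pauli--unitary decomposition of the unital qubit channel, the transpose invariance $P^{T}=P$ of Pauli channels, and the ricochet identity on $\ket{\Phi^+}$ to relocate the Pauli factor to the other subsystem. The only differences are organizational --- you absorb the two unitaries up front rather than interleaving them, and you add the explicit convex-decomposition step passing from the mixed state $\eta$ to a pure input (a step the paper's proof leaves implicit, even though the EA definition is stated for pure inputs).
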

\begin{proof}
    Because $\Lambda_1$ is a unital channel, we can write it as: 
    \begin{align}
        \Lambda_1 = \mathcal{U} \circ \mathcal{P} \circ \mathcal{V},
    \end{align}
    where $\mathcal{P}$ is a Pauli channel, and $\mathcal{U},\mathcal{V}$ are unitary channels~\cite{li2023unital}. 

      Assume that $\Lambda_2 \circ \widetilde{\Lambda} \circ \Lambda_1$ is not EB. Consider the state $\ket{\beta}=(\openone \otimes V^{\dagger}) \ket{\phi^+}$, where $\ket{\phi^+}=\frac{1}{\sqrt{2}}\left(\ket{00}+\ket{11}\right)$ is the maximally entangled 2-qubit state, and $V$ is a unitary matrix associated to $\mathcal{V}$, i.e., we have $\mathcal{V}(\rho)=V \rho V^{\dagger}$ for any qubit density matrix $\rho$. The state $ \openone \otimes \Lambda_2 \circ \widetilde{\Lambda} \circ \Lambda_1  \left(\ketbra{\beta}{\beta} \right)$ is entangled, since a channel is EB if and only if its Choi state is separable \cite{horodecki2003entanglement}. Moreover:
    \begin{align}
        \left( \openone \otimes \Lambda_2 \circ \widetilde{\Lambda} \circ \Lambda_1 \right) \left(\ketbra{\beta}{\beta} \right)&=\left(\openone \otimes \Lambda_2 \circ \widetilde{\Lambda} \circ \mathcal{U} \circ \mathcal{P} \circ \mathcal{V} \right) \left(\ketbra{\beta}{\beta} \right) \notag \\
        &=\left(\openone  \otimes \Lambda_2 \circ \widetilde{\Lambda} \circ \mathcal{U} \circ \mathcal{P} \right) \left(\ketbra{\phi^+}{\phi^+} \right) \notag \\
        &=\left(\mathcal{P}^{T_z}  \otimes \Lambda_2 \circ \widetilde{\Lambda} \circ \mathcal{U} \right) \left(\ketbra{\phi^+}{\phi^+} \right),
    \end{align}
    where $\mathcal{P}^{T_z}$ corresponds to the Pauli channel, which has its Kraus operators $(I,\sigma_x,\sigma_y,\sigma_z)$ transposed in the eigenbasis of $\sigma_z$ (we use the fact $\openone \otimes M \ket{\phi^+}=M^{T_z} \otimes \openone \ket{\phi^+}$ for any matrix $M$) Notice that $\mathcal{P}^{T_z}=\mathcal{P}$. Then, we have, for $\rho=\openone \otimes \widetilde{\Lambda} \circ \mathcal{U} \ketbra{\phi^+}{\phi^+}$: 
    \begin{align}
        \left( \openone \otimes \Lambda_2 \circ \widetilde{\Lambda} \circ \Lambda_1 \right) \left(\ketbra{\beta}{\beta} \right)&=\left(\mathcal{P} \otimes \Lambda_2 \right) \left(\rho \right).
    \end{align}
    Finally, defining $\rho' \equiv (\mathcal{V}^{\dagger} \otimes \openone)(\rho)$, we have:
    \begin{align}
        \left( \mathcal{U} \otimes \Lambda_2 \circ \widetilde{\Lambda} \circ \Lambda_1 \right) \left(\ketbra{\beta}{\beta} \right)&=\left(\Lambda_1 \otimes \Lambda_2 \right) \left(\rho' \right).
    \end{align}
    In the last equality, we applied a closure 
    relation $\mathcal{V} \circ \mathcal{V}^{\dagger}$ on 
    the right of $\mathcal{P}$, and we applied a local 
    unitary $\mathcal{U} \otimes \openone$ on the 
    left. Because $\openone \otimes \Lambda_2 \circ \widetilde{\Lambda} \circ \Lambda_1  \left(\ketbra{\beta}{\beta} \right)$ is entangled, and $\mathcal{U} \otimes \openone$ does not change entanglement property, $\left(\Lambda_1 \otimes \Lambda_2 \right) \left(\rho' \right)$ is entangled. It proves our property.
\end{proof}

Proposition~\ref{pr::unitarybettermidway} shows that if Alice's part of the channel is unital, then placing the source midway between Alice and Bob leads to a better entanglement distribution strategy, even when a filtering operation \( \widetilde{\Lambda} \) is allowed in the middle of the channel (i.e., at the location of the yellow bar in Fig.~\ref{fig::inkac}) when the source is placed at the edge.
We now proceed to prove a related and useful result.

\begin{prop}{\label{pr::transpositionsimulation}}

    If we can decompose $\Lambda_1^{T}$ as $\Lambda_1^T = \mathcal{F} \circ \Lambda_1 \circ \mathcal{E}$, where $T$ denotes the channel transposition in some basis, $\mathcal{F}$ is a positive map and $\mathcal{E}$ is a CP map, then the following implication holds: If $\openone \otimes \Lambda_2 \circ \Lambda_1$ preserves entanglement for some input state (i.e., it is not EA), then $\Lambda_1 \otimes \Lambda_2$ preserves entanglement for some input state.
\end{prop}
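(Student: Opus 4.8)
The plan is to mirror the proof of Proposition~\ref{pr::unitarybettermidway}, replacing the unital decomposition by the transposition identity $\Lambda_1^T = \mathcal{F}\circ\Lambda_1\circ\mathcal{E}$. First I would translate the hypothesis into a statement about a Choi state. Since $\openone\otimes\left(\Lambda_2\circ\Lambda_1\right)$ is not EA, the result of~\cite{Filippovea} implies that $\Lambda_2\circ\Lambda_1$ is not EB, and hence by~\cite{horodecki2003entanglement} its Choi state
\[
\chi := \left(\openone\otimes\Lambda_2\circ\Lambda_1\right)\!\left(\ketbra{\phi^+}{\phi^+}\right)
\]
is entangled. This fixes the maximally entangled input and provides the natural object on which the transposition trick acts.

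The key step is to push $\Lambda_1$ from the right factor onto the left at the cost of a transposition. Writing $\openone\otimes\left(\Lambda_2\circ\Lambda_1\right) = \left(\openone\otimes\Lambda_2\right)\circ\left(\openone\otimes\Lambda_1\right)$ and applying the identity $\openone\otimes M\ket{\phi^+} = M^{T_z}\otimes\openone\ket{\phi^+}$ to each Kraus operator of $\Lambda_1$, exactly as in Proposition~\ref{pr::unitarybettermidway}, I obtain
\[
\chi = \left(\Lambda_1^{T_z}\otimes\Lambda_2\right)\!\left(\ketbra{\phi^+}{\phi^+}\right).
\]
I would then invoke the hypothesis $\Lambda_1^{T_z} = \mathcal{F}\circ\Lambda_1\circ\mathcal{E}$ (taking the transposition basis to be the computational basis; a different basis only introduces fixed local unitaries, which do not affect entanglement) to factor the left channel and regroup,
\[
\chi = \left(\mathcal{F}\otimes\openone\right)\!\left[\left(\Lambda_1\otimes\Lambda_2\right)\!\left(\tau\right)\right], \qquad \tau := \left(\mathcal{E}\otimes\openone\right)\!\left(\ketbra{\phi^+}{\phi^+}\right),
\]
where $\tau$ is a valid (possibly unnormalized) positive operator because $\mathcal{E}$ is CP.

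The conclusion then follows from the defining property of positive maps: $\mathcal{F}\otimes\openone$ sends separable operators to separable operators, since $\left(\mathcal{F}\otimes\openone\right)\!\left(\rho_A\otimes\rho_B\right) = \mathcal{F}(\rho_A)\otimes\rho_B\ge 0$ and separable states are convex mixtures of product states. Contrapositively, because $\chi$ is entangled, the operator $\left(\Lambda_1\otimes\Lambda_2\right)\!\left(\tau\right)$ must itself be entangled. Finally, if $\Lambda_1\otimes\Lambda_2$ were EA it would map every pure input to a separable output, and hence—by convexity of the set of separable states—every mixed input, including $\tau$, to a separable output; this contradicts the entanglement of $\left(\Lambda_1\otimes\Lambda_2\right)\!\left(\tau\right)$. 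Therefore $\Lambda_1\otimes\Lambda_2$ is not EA, which is the claim.

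I expect the main obstacle to be bookkeeping around the transposition rather than anything deep: one must ensure that the basis in the hypothesis $\Lambda_1^T$ is aligned with the basis $T_z$ produced by the Choi identity, and must track that $\mathcal{F}$ need only be \emph{positive} (not CP) for the separability-preservation argument to go through, while $\mathcal{E}$ must be CP precisely to guarantee that $\tau\ge 0$ is a legitimate input state. The only other point requiring explicit justification is the passage from ``$\left(\Lambda_1\otimes\Lambda_2\right)\!\left(\tau\right)$ entangled for mixed $\tau$'' to ``not EA on pure inputs,'' which is immediate from the convexity used above.
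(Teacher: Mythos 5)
Your proposal is correct and follows essentially the same route as the paper's own proof: start from the entangled Choi state, use the ribbon identity $\openone\otimes M\ket{\phi^+}=M^{T}\otimes\openone\ket{\phi^+}$ to move $\Lambda_1$ to the left as $\Lambda_1^{T}$, substitute the decomposition $\mathcal{F}\circ\Lambda_1\circ\mathcal{E}$, and use positivity of $\mathcal{F}$ to show that $\mathcal{F}\otimes\openone$ cannot create entanglement from a separable operator. The only cosmetic differences are that you make the pure-to-mixed-input convexity step explicit (which the paper leaves implicit) while the paper explicitly rules out the degenerate normalization $\Tr\bigl((\mathcal{E}\otimes\openone)(\ketbra{\phi^+}{\phi^+})\bigr)=0$, which in your version is excluded automatically because the Choi state has unit trace.
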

\begin{proof}

Assume that $\openone \otimes \Lambda_2 \circ \Lambda_1$ is not EA. Then the state, $\openone \otimes \Lambda_2 \circ \Lambda_1 ( \ketbra{\phi^+}{\phi^+})$ is entangled, since a channel is entanglement-breaking if and only if its Choi state is separable \cite{horodecki2003entanglement}. Therefore,
\begin{align}
    \openone \otimes \Lambda_2 \circ \Lambda_1 ( \ketbra{\phi^+}{\phi^+}) &= \Lambda_1^T \otimes \Lambda_2 (\ketbra{\phi^+}{\phi^+}) \notag \\
    &= \mathcal{F} \circ \Lambda_1 \circ \mathcal{E} \otimes \Lambda_2 ( \ketbra{\phi^+}{\phi^+}) \notag \\
    &= c \times \mathcal{F} \circ \Lambda_1 \otimes \Lambda_2 (\rho) 
    \text{ is also entangled.}\label{eq:putting_lambda_1_transpose_prepostprocessed} 
    \end{align}
    In the formula above, we substituted:
    \begin{align}
    \rho &\equiv (\mathcal{E} \otimes \openone) (\ketbra{\phi^+}{\phi^+})/c,  \label{eq::m}\\
    c &\equiv \Tr((\mathcal{E} \otimes \openone) (\ketbra{\phi^+}{\phi^+})).
\end{align}
We begin by excluding the case $c=0.$ Note that $\Tr((\mathcal{E} \otimes \openone) (\ketbra{\phi^+}{\phi^+}))=\Tr(\mathcal{E}(\openone))=c.$ $c=0$ implies that   $\mathcal{E} ( \openone)=0,$ since $\mathcal{E}$ is CP. In that case, the decomposition $\Lambda_1^T = \mathcal{F} \circ \Lambda_1 \circ \mathcal{E}$ is not possible, and thus this scenario must be excluded. 
In Eq. \eqref{eq::m}, the state $\rho$ is a valid quantum state due to the complete positivity of $\mathcal{E}$. What remains to be shown is that the map $\mathcal{F} \otimes \openone$ cannot generate entanglement from a separable state. If this is the case, then $\Lambda_1 \otimes \Lambda_2(\rho)$ is necessarily entangled, which is what we want to show. 

 Assume that we can decompose $\Lambda_1 \otimes \Lambda_2(\rho)$ as: $\widetilde{\rho} \equiv \Lambda_1 \otimes \Lambda_2(\rho)=\sum_i p_i \sigma_i \otimes \delta_i$ for some density matrices $\sigma_i$, $\delta_i$ (i.e. the state $\Lambda_1 \otimes \Lambda_2(\rho)$ is separable). We have:
\begin{align}
    c (\mathcal{F} \otimes \openone)(\widetilde{\rho})&=\sum_i c p_i \Tr(\mathcal{F}(\sigma_i)) \times \mathcal{F}(\sigma_i)/\Tr\left(\mathcal{F}\left(\sigma_i\right)\right) \otimes \delta_i.
\end{align}
Thus, the state $c \times \mathcal{F} \circ \Lambda_1 \otimes \Lambda_2 (\rho)$ is separable if $\Lambda_1 \otimes \Lambda_2 (\rho)$ is separable. Additionally, each ($\mathcal{F}(\sigma_i)/\Tr(\mathcal{F}(\sigma_i))$ is a positive semi definite matrix of trace $1$, hence a valid density matrix)\footnote{Note that by taking the trace of the whole quantity we have $\sum_i c p_i \Tr(\mathcal{F}(\sigma_i))=1$.}. From Eq. \eqref{eq:putting_lambda_1_transpose_prepostprocessed} we obtain that if $\openone \otimes \Lambda_2 \circ \Lambda_1 ( \ketbra{\phi^+}{\phi^+})$ is entangled, then $c \times \mathcal{F} \circ \Lambda_1 \otimes \Lambda_2 (\rho)$ is entangled, and consequently $\Lambda_1 \otimes \Lambda_2 (\rho)$ must be entangled. This observation finishes the proof.
\end{proof}

We now apply this proposition to demonstrate that, for qubit channels with Kraus rank at most 3, placing the source midway between Alice and Bob is almost always (in a precise sense defined in Proposition \ref{prr:rank3_simulation}) the optimal strategy. 

\begin{prop}
Assume that $\Lambda$ is a quantum channel with Kraus rank at most 3, i.e., it can be written as $\Lambda(\rho) = \sum_{i=1}^3 K_i \rho K_i^{\dagger}$. Then, we can almost always find two matrices $A$ and $B$ such that for any $i \in [1,3]$,
\begin{align}
    A K_i B = K_i^T.
\end{align}
This identity implies that:
\begin{align}
    \Lambda^T(\rho)=\mathcal{F} \circ \Lambda \circ \mathcal{E},
\end{align}
where $\mathcal{E}$ and $\mathcal{F}$ are Kraus rank-1 CP maps. By "almost always", we mean that it is always true apart from a zero-measure set in the space of Kraus operators. More precisely, because the matrices $A$ and $B$ involve fractions with denominators that can vanish for specific choices for the family of Kraus operators $\{K_i\}_{i=1}^3$, there is always a solution, apart from a zero-measure set.
\label{prr:rank3_simulation}
\end{prop}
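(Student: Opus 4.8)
The plan is to treat the simultaneous matrix equation $A K_i B = K_i^T$ ($i=1,2,3$) constructively and generically, reducing it to a question about simultaneous similarity of two pairs of $2\times2$ matrices. First I would dispose of the non-generic configurations by assuming $K_1$ invertible (this removes only the zero-measure set $\det K_1 = 0$) and seeking an invertible $A$. The $i=1$ equation then lets me solve for $B$ explicitly as $B = K_1^{-1} A^{-1} K_1^{T}$, which already introduces the inverse $K_1^{-1}$, the first source of the ``denominators'' mentioned in the statement. Substituting this into the $i=2$ and $i=3$ equations, the dependence on $B$ cancels and both conditions become pure conjugation relations
\begin{align}
A X_2 A^{-1} = Y_2, \qquad A X_3 A^{-1} = Y_3,
\end{align}
with $X_j = K_j K_1^{-1}$ and $Y_j = (K_1^{-1}K_j)^{T}$ for $j=2,3$. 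The entire problem thus collapses to: does a \emph{single} invertible $A$ simultaneously conjugate the pair $(X_2,X_3)$ onto the pair $(Y_2,Y_3)$?

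The key step is to verify that these two pairs are simultaneously similar. For pairs of $2\times2$ matrices, a complete set of invariants under simultaneous conjugation by $GL_2$ is given by $\operatorname{tr}X_2,\ \operatorname{tr}X_3,\ \operatorname{tr}(X_2 X_3)$ together with $\det X_2,\ \det X_3$, and I would check that each matches its $Y$-counterpart. This follows from two elementary facts: transposition preserves trace and determinant, and $K_1^{-1}K_j$ is conjugate (via $K_1$) to $K_j K_1^{-1}$, so the eigenvalue data of $X_j$ and $Y_j$ agree; the mixed trace matches by cyclicity, since $\operatorname{tr}(Y_2 Y_3) = \operatorname{tr}\big((K_1^{-1}K_3\,K_1^{-1}K_2)^{T}\big) = \operatorname{tr}(K_1^{-1}K_3\,K_1^{-1}K_2)$ equals $\operatorname{tr}(X_2 X_3) = \operatorname{tr}(K_1^{-1}K_2\,K_1^{-1}K_3)$ after a cyclic shift. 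With all invariants equal, the classical theory of simultaneous similarity guarantees that $(X_2,X_3)$ and $(Y_2,Y_3)$ lie in the same $GL_2$-orbit whenever the pair is irreducible (generates $M_2$), which is the generic situation. Solving the linear system $A X_j = Y_j A$ then expresses $A$ as a ratio of polynomials in the entries of the $K_i$, and $B = K_1^{-1}A^{-1}K_1^{T}$ completes an explicit pair. Finally $\mathcal{E}(\rho) = B\rho B^{\dagger}$ and $\mathcal{F}(\rho) = A\rho A^{\dagger}$ are the advertised Kraus rank-$1$ CP maps, and conjugating the Kraus decomposition immediately yields $\Lambda^{T} = \mathcal{F}\circ\Lambda\circ\mathcal{E}$.

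The hard part, and the origin of the ``almost always'' qualifier, is controlling the degenerate locus where the construction fails, which is exactly where some denominator vanishes. This includes $\det K_1 = 0$, the singularity of the constructed $A$, coincident eigenvalues that enlarge the relevant centralizer, and most subtly the case where $(X_2,X_3)$ is reducible (shares a common eigenvector): there matching the trace invariants no longer forces $GL_2$-conjugacy, because of the familiar semisimple-versus-non-semisimple ambiguity, so the irreducibility hypothesis is genuinely needed. I would therefore argue that the union of these exceptional configurations is cut out by the vanishing of finitely many polynomials in the entries of $\{K_i\}_{i=1}^3$ and hence has Lebesgue measure zero, which is precisely what makes ``apart from a zero-measure set'' rigorous. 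I expect this measure-zero bookkeeping, rather than the similarity argument itself, to require the most care.
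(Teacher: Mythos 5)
Your proposal is correct in outline but takes a genuinely different route from the paper. The paper's proof is a brute-force symbolic computation: the authors parametrize each $2\times 2$ Kraus operator, solve the system $AK_iB=K_i^T$ in Mathematica, and report (without displaying) explicit rational expressions for $A$ and $B$, the exceptional set being where the denominators vanish. You instead argue structurally: eliminating $B=K_1^{-1}A^{-1}K_1^{T}$ converts the remaining equations into the simultaneous-similarity problem $AX_jA^{-1}=Y_j$ for $X_j=K_jK_1^{-1}$, $Y_j=(K_1^{-1}K_j)^{T}$, whose invariants $\Tr X_j$, $\det X_j$, $\Tr(X_2X_3)$ automatically match their $Y$-counterparts (your cyclicity check is correct), and you invoke the classical fact that these five quantities classify irreducible pairs of $2\times2$ matrices up to simultaneous conjugation. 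This buys an explanation of \emph{why} a generic solution exists and a transparent description of the exceptional locus (singular pivot, reducible pair), where the paper's approach would in principle deliver closed-form $A,B$. Two points to tighten, neither fatal: (i) since $Y_j=K_1^{T}X_j^{T}(K_1^{T})^{-1}$, your similarity question is exactly whether an irreducible pair is simultaneously similar to its transposed pair, which holds in dimension $2$ because every trace word is reversal-invariant; Schur's lemma then gives that any nonzero intertwiner $A$ is automatically invertible, so you need not worry separately about the singularity of the constructed $A$. (ii) The measure-zero claim must be read relative to the constraint surface $\sum_iK_i^{\dagger}K_i=\openone$, so you should exhibit one non-degenerate Kraus triple to certify that your polynomial exceptional conditions do not vanish identically there; the paper glosses over this as well.
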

\begin{proof}
    
    To demonstrate this, we parametrized each Kraus operator using four parameters and used Mathematica to derive explicit expressions for the matrices $A$ and $B$ as functions of the family $\{K_i\}.$ Due to the length and complexity of these expressions, we do not include them in this article.

In the course of this derivation, we found that some coefficients in the expressions for $A$ and $B$ involve denominators that may vanish for specific choices of the Kraus operators. As a result, our construction fails in these exceptional cases. However, these problematic choices form a set of measure zero in the space of all possible Kraus operator families.
\end{proof}

We now state Proposition \ref{prr:middle_better_analytical} which is one of the main results of this section. 
\begin{prop}
If $\Lambda_1$ has Kraus rank at most $3$, then the following proposition holds almost always (i.e. it is always true apart from a set of measure zero within the space of all possible Kraus operators for $\Lambda_1$: see Proposition \ref{prr:rank3_simulation}). 

If the channel $\openone \otimes \Lambda_2 \circ \Lambda_1$ preserves entanglement for some input state (i.e. it is not EA), then the product channel $\Lambda_1 \otimes \Lambda_2$ also preserves entanglement for some input state.

\label{prr:middle_better_analytical}
\end{prop}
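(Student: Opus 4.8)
The plan is to combine the two preceding propositions directly: Proposition~\ref{prr:rank3_simulation} supplies exactly the structural decomposition of $\Lambda_1^T$ required as the hypothesis of Proposition~\ref{pr::transpositionsimulation}, so the result follows with essentially no new work. The whole proof is therefore a matter of checking that the maps produced by the rank-$3$ construction meet the regularity conditions demanded by the transposition-simulation lemma.

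First I would invoke Proposition~\ref{prr:rank3_simulation}. Since $\Lambda_1$ has Kraus rank at most $3$, write $\Lambda_1(\rho)=\sum_{i=1}^3 K_i \rho K_i^\dagger$. Apart from a zero-measure set in the space of Kraus operators, there exist matrices $A$ and $B$ with $A K_i B = K_i^T$ for every $i$. Setting $\mathcal{E}(\rho)=B \rho B^\dagger$ and $\mathcal{F}(\rho)=A \rho A^\dagger$, both of which are Kraus rank-$1$ CP maps, one checks that $\mathcal{F}\circ\Lambda_1\circ\mathcal{E}(\rho)=\sum_i (A K_i B)\rho (A K_i B)^\dagger=\sum_i K_i^T \rho (K_i^T)^\dagger=\Lambda_1^T(\rho)$, i.e. $\Lambda_1^T=\mathcal{F}\circ\Lambda_1\circ\mathcal{E}$.

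Next I would observe that this decomposition meets the hypotheses of Proposition~\ref{pr::transpositionsimulation}: there $\mathcal{F}$ is only required to be a positive map and $\mathcal{E}$ a CP map, and a Kraus rank-$1$ map $\rho\mapsto A\rho A^\dagger$ is completely positive, hence in particular positive. Thus Proposition~\ref{pr::transpositionsimulation} applies verbatim, and its conclusion---if $\openone\otimes\Lambda_2\circ\Lambda_1$ is not EA then $\Lambda_1\otimes\Lambda_2$ is not EA---is precisely the statement to be proved. The degenerate case $B=0$ (equivalently $\mathcal{E}(\openone)=0$) need not be treated separately, since Proposition~\ref{pr::transpositionsimulation} already excludes it through the $c\neq 0$ argument, and such an $\mathcal{E}$ cannot in any case yield a valid decomposition of $\Lambda_1^T$.

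Because both building blocks are already in place, there is no genuine obstacle at this final step; the only thing requiring care is the bookkeeping of the ``almost always'' qualifier. The measure-zero exceptional set is inherited entirely from Proposition~\ref{prr:rank3_simulation}---it comes from the vanishing of the denominators appearing in the closed-form expressions for $A$ and $B$---so the statement holds on exactly the same full-measure set of Kraus operators for $\Lambda_1$, with no further exceptions introduced here.
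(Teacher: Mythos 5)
Your proposal is correct and follows essentially the same route as the paper: the paper's proof likewise just invokes Proposition~\ref{prr:rank3_simulation} to obtain $\Lambda_1^T=\mathcal{F}\circ\Lambda_1\circ\mathcal{E}$ with rank-$1$ CP maps and then applies Proposition~\ref{pr::transpositionsimulation}. Your additional checks (that a rank-$1$ CP map is in particular positive, and that the exceptional set is inherited unchanged) are sound and only make explicit what the paper leaves implicit.
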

\begin{proof}
    From Proposition \ref{prr:rank3_simulation}, we know that \begin{align}
    \Lambda_1^T(\rho)=\mathcal{F} \circ \Lambda_1 \circ \mathcal{E},
\end{align}
where $\mathcal{E}$ and $\mathcal{F}$ are rank-1 CP maps, is almost always true. We then apply Proposition \ref{pr::transpositionsimulation} to conclude the proof.
\end{proof}
We have shown that if the Kraus rank of the channel $\Lambda_1$ is at most 3, then it is almost always better to place the source midway between Alice and Bob. In the following, we provide numerical evidence that, in the qubit-qubit scenario, placing the source midway is always optimal.
To establish this, we determine when the channel in Eq. \eqref{eq::serialc} always produces a separable output. This occurs precisely when the composed channel $\Lambda_2 \circ \Lambda_1$ is entanglement-breaking (EB). The Choi matrix of the channel $\Lambda_2 \circ \Lambda_1$ is given by \cite{KhatriPQCT}:

\begin{equation}
    C_{12}^{s}=\Tr_{A'B}(\openone^{A}\otimes2\times\phi^{+A'B}\otimes\openone^{B'} C_1^{AA'}\otimes C_{2}^{BB'}),
    \label{eq::choiserialconnection}
\end{equation}
where $C_1^{AA'}$ and $C_2^{BB'}$ denote the Choi matrix of the channels $\Lambda_1$ and $\Lambda_2$, respectively, and $\phi^{+}$ is the density matrix of the maximally entangled state. A quantum channel is entanglement-breaking (EB) if and only if its Choi matrix is separable~\cite{horodecki2003entanglement}.
Therefore, for qubit channels $\Lambda_1$ and $\Lambda_2$, we can determine whether the composed channel $\Lambda_2 \circ \Lambda_1$ is EB by checking whether the Choi matrix given in Eq.~\eqref{eq::choiserialconnection} has a positive partial transpose (PPT). Concretely, this amounts to evaluating the minimal eigenvalue of the partially transposed Choi matrix $C_{12}^{s T_B}$. If this eigenvalue is positive, then the Choi matrix is PPT, and hence $\Lambda_2 \circ \Lambda_1$ is entanglement-breaking.

We have established a simple criterion for determining whether the channel corresponding to the source placed at the edge can produce an entangled output. In the following, we present an analogous result for the case where the source is placed midway between the parties. We formulate the following proposition:

\begin{prop}\label{pr::sdprelaxationchannel}
    Assume that we are given a fixed channel of the form $\Lambda_1\otimes\Lambda_2$
where $\Lambda_1$ and $\Lambda_2$ are quantum channels. Let us denote by $\Tilde{N}(\rho)$ the minimal eigenvalue of the partial transpose of a bipartite state $\rho$. Then the following inequality holds: 
\begin{equation}
    \min_{\ket{\psi}^{AB}}\Tilde{N}(\Lambda_1\otimes\Lambda_2(\ket{\psi}^{AB}\bra{\psi}^{AB}))\geq\min_{X^{T_{A'B'}}\geq0}\Tr\left(X^{ABA'B'}C^{T_B'}\right),
    \label{eq::propositionsdprelaxation}
\end{equation}
where $C$ denotes the Choi matrix of the channel $\Lambda_1\otimes\Lambda_2$. 
\end{prop}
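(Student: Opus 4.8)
The plan is to read the right-hand side as a semidefinite relaxation of the left-hand side. Concretely, I would (i) rewrite the minimal eigenvalue $\tilde{N}$ of the partially transposed output as a variational optimization over a pure probe vector $\ket{\phi}$, (ii) substitute the Choi representation of the output of $\Lambda_1\otimes\Lambda_2$ so that both the input state and the probe are collected into a single operator $X$ tested against $C^{T_{B'}}$, and (iii) verify that every $X$ arising in this way satisfies the constraint $X^{T_{A'B'}}\ge0$, so that replacing the structured minimization by the minimization over all $X$ with $X^{T_{A'B'}}\ge0$ can only decrease the value.

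For step (i) I would use the variational characterization of the smallest eigenvalue: for any bipartite operator $\sigma$ on $A'B'$,
\begin{equation}
\tilde{N}(\sigma)=\min_{\ket{\phi}}\bra{\phi}\sigma^{T_{B'}}\ket{\phi}=\min_{\ket{\phi}}\Tr\left[\sigma\,(\ketbra{\phi}{\phi})^{T_{B'}}\right],
\end{equation}
where the second equality uses that partial transposition is self-adjoint with respect to the trace form, $\Tr[M^{T_{B'}}N]=\Tr[MN^{T_{B'}}]$. For step (ii) I would insert the Choi relation $\Lambda_1\otimes\Lambda_2(\ketbra{\psi}{\psi})=\Tr_{AB}[((\ketbra{\psi}{\psi})^{T}\otimes\openone^{A'B'})\,C]$ and collect everything into a single trace over all four systems $ABA'B'$. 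Carrying the $B'$-transpose off the probe and onto the Choi matrix (again by self-adjointness) gives
\begin{equation}
\tilde{N}\left(\Lambda_1\otimes\Lambda_2(\ketbra{\psi}{\psi})\right)=\min_{\ket{\phi}}\Tr\left[\left((\ketbra{\psi}{\psi})^{T}\otimes\ketbra{\phi}{\phi}\right)C^{T_{B'}}\right],
\end{equation}
so that, writing $X=(\ketbra{\psi}{\psi})^{T}\otimes\ketbra{\phi}{\phi}$, the entire left-hand side of the proposition equals $\min_{\ket{\psi},\ket{\phi}}\Tr[X\,C^{T_{B'}}]$.

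The last step is feasibility and relaxation. For this product $X$ one has $X^{T_{A'B'}}=(\ketbra{\psi}{\psi})^{T}\otimes(\ketbra{\phi}{\phi})^{T_{A'B'}}=\ketbra{\bar{\psi}}{\bar{\psi}}\otimes\ketbra{\bar{\phi}}{\bar{\phi}}\ge0$, since the full transpose of a rank-one projector is the projector onto the complex-conjugated vector. Thus each such $X$ is admissible for the right-hand side, which minimizes the same linear objective over the strictly larger set $\{X:X^{T_{A'B'}}\ge0\}$; enlarging the feasible region cannot increase the minimum, which yields the claimed inequality. I expect the only genuine work to be the transpose bookkeeping in (i)--(ii) and the positivity check in (iii), rather than any deep idea: the result is a relaxation whose (generally strictly positive) gap comes from discarding both the rank-one property and the tensor-product factorization of $X$. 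I would also remark that the constructed feasible points obey $\Tr X=1$, so imposing this natural normalization keeps the program bounded and the bound non-trivial without altering the relaxation argument.
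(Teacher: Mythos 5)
Your proposal is correct and follows essentially the same route as the paper: the Rayleigh variational characterization of the minimal eigenvalue, insertion of the Choi representation, shifting the $B'$-transpose onto $C$, and relaxing the resulting structured minimization to the PPT-constrained SDP (the paper passes through the intermediate identification with separable states across the $AB|A'B'$ cut, while you verify feasibility of the product $X$ directly, which is an equivalent and equally valid shortcut). Your closing remark about the normalization $\Tr X=1$ is a sensible observation the paper leaves implicit, but it does not change the argument.
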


\begin{proof}
The following chain of expressions proves the proposition.
    \begin{equation}
    \begin{split}
        &\min_{\ket{\psi}^{AB}}\Tilde{N}(\Lambda_1\otimes\Lambda_2(\ket{\psi}^{AB}\bra{\psi}^{AB}))=\\
        &=\min_{\ket{\psi},\ket{\varphi}}\bra{\varphi}\Tr_{AB}\left((\ket{\psi}\bra{\psi}^{AB})^T\otimes\openone^{A'B'}C\right)^{T_{B'}}\ket{\varphi}=\\
        &=\min_{\ket{\psi},\ket{\varphi}}\bra{\varphi}\Tr_{AB}\left(\ket{\psi}\bra{\psi}^{AB}\otimes\openone^{A'B'}C^{T_{B'}}\right)\ket{\varphi}=\\
        &=\min_{\ket{\psi},\ket{\varphi}}\Tr_{AB}\left(\ket{\psi}\bra{\psi}^{AB}\otimes\ket{\varphi}\bra{\varphi}^{A'B'}C^{T_{B'}}\right)=\\
        &=\min_{X^{ABA'B'}\in S^{AB|A'B'}}\Tr_{AB}\left(X^{ABA'B'}C^{T_{B'}}\right)\geq\\
        &\min_{X^{T_{A'B'}}\geq0}\Tr\left(X^{ABA'B'}C^{T_{B'}}\right).
    \end{split}
    \label{eq::propositionproof}
\end{equation}
In the second equality, we bring the partial transpose operation inside the minimization. In the final step, we use the fact that the set of separable states $S$ is a subset of the set of PPT states~\cite{Horodeckipptcriterion,Perespptcriterion}. The notation $C^{T_{B'}}$ indicates that the partial transpose is taken with respect to the $B'$ subsystem of the Choi matrix, which acts on the composite system $ABA'B'$.
\end{proof}

Proposition \ref{pr::sdprelaxationchannel} provides a lower bound for the minimal eigenvalue of the partial transpose of the channel's output. If the right-hand side of the relation \eqref{eq::propositionsdprelaxation} is non-negative, it follows that the smallest eigenvalue of the output's partial transpose is also non-negative. In the qubit-qubit and qubit-qutrit cases, this implies that the channel is entanglement-annihilating \cite{Horodeckipptcriterion,Perespptcriterion}.

We introduce two minimization criteria to determine which source placement configuration yields better entanglement distribution. Based on these criteria, we formulate the following conjecture.

\begin{num}
    Consider two-qubit system. Let us denote minimal eigenvalue of partial transpose of \eqref{eq::choiserialconnection} by $\lambda_{min}(C_{12}^{s T_B}).$ We have:
   \begin{equation}
       \frac{\lambda_{min}(C_{12}^{s T_B})}{2}\geq \min_{X^{T_{A'B'}}\geq0}\Tr\left(X^{ABA'B'}C^{T_B'}\right).
       \label{eq::strategiesinequality}
   \end{equation}
\end{num}
    
We tested 200{,}000 randomly generated quantum channels and did not observe a single case in which the above conjecture failed.

 Moreover, in scenarios where the bound in equation~\eqref{eq::propositionsdprelaxation} is saturated, which, as we show later in this paper, occurs frequently, equation~\eqref{eq::strategiesinequality} can be used to establish that if \( \Lambda_1 \circ \Lambda_2 \) is not entanglement-breaking (EB), then \( \Lambda_1 \otimes \Lambda_2 \) is not entanglement-annihilating (EA). The reasoning is as follows: if \( \Lambda_1 \circ \Lambda_2 \) is not EB, then the left-hand side of equation~\eqref{eq::strategiesinequality} is negative. Consequently, the right-hand side must also be negative, which implies that
\(
\min_{\ket{\psi}^{AB}} \widetilde{N}\left((\Lambda_1 \otimes \Lambda_2)\left(\ket{\psi}^{AB}\bra{\psi}\right)\right) < 0.
\)
This, in turn, indicates that the output of \( \Lambda_1 \otimes \Lambda_2 \) is entangled for some input state, meaning the channel is not EA.
These findings support the conclusion that there is no scenario in which placing the source at the end of the channel yields entangled output, while placing it in the middle always results in a separable state.

\section{\label{sec:3} Source placed midway between Alice and Bob}
 Previously, we showed that placing the source midway between the parties typically leads to a better strategy. We now turn our attention to analyzing this configuration in more details. In the following sections, we demonstrate that, in the qubit–qubit setting, the relation given in Eq.~\eqref{eq::propositionsdprelaxation} often holds as an equality. Furthermore, in this case, the partial transpose has at most one negative eigenvalue \cite{Sanperaonenegativeeigenvalue}, allowing us to upper bound the output negativity using Proposition \ref{pr::sdprelaxationchannel}. 

Assume that the source is placed midway between Alice and Bob, resulting in a channel of the form given by Eq.~\eqref{eq::parallel}. The channels $\Lambda_1$ and $\Lambda_2$ are quantum channels that may depend on multiple parameters. Our objective is to determine the range of these parameters for which the channel is capable of distributing entanglement. In other words, we aim to identify the parameter regimes where the channel is not entanglement-annihilating, that is, where there exists at least one input state that yields entangled output.
Furthermore, using Eq.~\eqref{eq::propositionsdprelaxation}, we will compute the maximal negativity of the channel output (i.e., the absolute value of the minimal negative eigenvalue of the partial transpose of the output state), and identify the optimal input state—defined as the state that achieves this maximal negativity. The procedure is as follows: using Proposition~\ref{pr::sdprelaxationchannel}, we first compute a lower bound on the minimal eigenvalue of the partial transpose. We then attempt to identify an input state that attains this bound. If such a state exists, the bound is tight, and we thereby determine both the minimal eigenvalue and the state that achieves it.

\subsection{Depolarizing - amplitude damping}{\label{s::da}}

Assume that in Eq. \eqref{eq::parallel}, $\Lambda_1$ is a depolarizing channel, denoted by $\Lambda^{\text{Depol}}_p$, and $\Lambda_2$ is an amplitude damping channel, denoted by $\Lambda^{\text{AD}}_\gamma$.
The Kraus operators of the amplitude damping channel are given by:
\begin{equation}
    K^{\text{AD}}_0=\left(
\begin{array}{cc}
 1 & 0 \\
 0 & \sqrt{1-\gamma}  \\
\end{array}
\right),
\end{equation}
\begin{equation}
    K^{\text{AD}}_1=\left(
\begin{array}{cc}
 0 & \sqrt{\gamma}  \\
 0 & 0 \\
\end{array}
\right),
\label{def:kraus_amplitude_damping}
\end{equation}
where $\gamma\in[0,1]$ is the damping parameter determining 
the strength of the noise. The depolarizing 
channel $
\Lambda^{\text{Depol}}_p$ is given by the following four Kraus operators: $K^{\text{Depol}}_0=\sqrt{1-\frac{3}{4}p_s}\openone,$ $K^{\text{Depol}}_1=\sqrt{\frac{1}{4}p_s}\sigma_x,$ $K^{\text{Depol}}_2=\sqrt{\frac{1}{4}p_s}\sigma_y,$ $K^{\text{Depol}}_3=\sqrt{\frac{1}{4}p_s}\sigma_z.$ Where $\sigma_{x}, \sigma_{y}, \sigma_{z}$ are the Pauli matrices defined in Definition \ref{def:Pauli} and $p_s\in[0,1]$ is depolarizing parameter. The depolarizing channel 
is EB if and only if $p_s\geq\frac{2}{3}$ \cite{KhatriPQCT}. 

We can express the input state of the channel using the 
Schmidt decomposition:
\begin{equation}
    \ket{\psi}=\sqrt{c}\ket{a}\otimes\ket{b}+\sqrt{1-c}\ket{a^{\bot}}\otimes\ket{b^{\bot}},
    \label{eq::schmidt}
\end{equation}
The parameter $c,$ which range is in $[0,0.5]$ corresponds to the geometric entanglement of the state \cite{Weigeometricentanglement}. 
 For any pair of orthonormal states we can always find a diagonal unitary $V$ such that
\begin{equation}
    \begin{split}
        &\ket{s}=V\ket{b}=\sqrt{s_1}\ket{0}+\sqrt{1-s_1}\ket{1},\\
        &\ket{s^{\bot}}=V\ket{b^{\bot}}=-\sqrt{1-s_1}\ket{0}+\sqrt{s_1}\ket{1},
        \label{eq::sbot}
    \end{split}
\end{equation}
 where $s_1$ is a real number in the interval $[0,1].$ The amplitude damping channel commutes with diagonal unitaries.  We can exploit this symmetry and change the initial state using Eq.~\eqref{eq::sbot}. Such a change will not alter the output entanglement. Additionally, the depolarizing channel commutes with all unitaries. Therefore, we can also do a change of basis. Hence, the following state will have the same output entanglement as the state in Eq.~\eqref{eq::schmidt}:
\begin{equation}
    \ket{\varphi}=\sqrt{c}\ket{0}\otimes\ket{s}+\sqrt{1-c}\ket{1}\otimes\ket{s^{\bot}}.
    \label{eq::psidad}
\end{equation}

In the following, we determine the range of channel parameters for which the channel is not entanglement-annihilating (EA). Recall that the depolarizing channel becomes entanglement-breaking when $p_s \geq \frac{2}{3}.$ Thus, in this regime, the output is always separable, regardless of the input state. Similarly, the amplitude damping channel is entanglement-breaking for $\gamma=1.$ Let $\rho^{\text{Depol,AD}}_f$ denote the output state of the channel. 
\begin{equation}   
\rho^{\text{Depol,AD}}_f=\Lambda^{\text{Depol}}_p\otimes\Lambda^{\text{AD}}_\gamma(\ket{\varphi}\bra{\varphi}),
    \label{eq::outdad}
\end{equation}
where $\ket{\varphi}$ is given by Eq.~\eqref{eq::psidad}. In the two-qubit scenario, a state is entangled if and only if it is NPT~\cite{Perespptcriterion,Horodeckipptcriterion}. Moreover, if a state is NPT, its partial transpose has exactly one negative eigenvalue \cite{Sanperaonenegativeeigenvalue}.
 Hence, the determinant of the partial transpose of a two-qubit entangled state is negative \cite{Augusiakdeterminant}.  Therefore, to determine whether a given channel is not EA, we need to find the minimal value of $\det ((\rho^{\text{Depol,AD}}_f)^{T_B}),$ where the minimization is performed over the set of all input states, $\ket{\varphi}$ (more precisely, this set is obtained from $\ket{\varphi}$ in Eq. \eqref{eq::psidad}, where $c$ and $s_1$ are varied such that $c \in [0,0.5]$ and $s_1 \in [0,1]$). If this determinant is non-negative for all inputs, then the partial transpose is always positive semi-definite, and the output state cannot be entangled. Consequently, the channel must be entanglement-annihilating.
 
 In what follows, we will find the minimum of $\det ((\rho^{\text{Depol,AD}}_f)^{T_B}).$ 
Solving $\frac{d}{ds_1}\det((\rho^{\text{Depol,AD}}_f)^{T_B})=0$ for $s_1$ we obtain three solutions:
\begin{equation}
    \begin{split}
        &s_1= \frac{c}{2 c-1}\\
        &s_1= \frac{2 c^2 \gamma-\sqrt{(1-2 c)^2 (c-1) c \gamma}-c \gamma}{4 c^2 \gamma-4
   c \gamma+\gamma}\\
   &s_1= \frac{2 c^2 \gamma+\sqrt{(1-2 c)^2 (c-1) c \gamma}-c \gamma}{4 c^2 \gamma-4
   c \gamma+\gamma}.
    \end{split}
    \label{eq::detsol}
\end{equation}
For $c=0$ all solutions are 0 and the output is always separable -- a separable input cannot create an entangled output.
For $0< c\leq \frac{1}{2},$ the expression under the square root in Eq. \eqref{eq::detsol} is negative. Consequently, solutions 2 and 3 in Eq. \eqref{eq::detsol} are complex. Since $s_1$ is a real number in the $[0,1]$ interval, we can discard these solutions. The first solution is negative, meaning that it is not in the $[0,1]$ interval. Therefore, the equation $\frac{d}{ds_1}\det((\rho^{\text{Depol,AD}}_f)^{T_B})=0$ has no physically valid solution and the extremal values occur at $s_1=0$ and $s_1=1.$ Let us denote the value of this determinant for $s_1=0$ by $d_{s_1=0}$ and for $s_1=1$ by $d_{s_1=1}.$ There is the following relation:
\begin{equation}
    d_{s_1=0}-d_{s_1=1}=\frac{1}{16} (2 c-1) (\gamma -1)^2 \gamma  (2 (c-1) c (\gamma
   -1)+\gamma ) \left(p_s-2\right){}^2 p_s^2.
   \label{eq::ds0mds1}
\end{equation}
Since $c\in[0,0.5],$ the above expression is always non-positive. This implies that the determinant reaches its minimum at $s_1=0.$ Consequently, if the channel is not EA, there will exist an output entangled state with $s_1=0$ and some $c \in [0,0.5]$. We now investigate this scenario. We have:  
\begin{align} \label{eq::ds0}
d_{s_1=0}= & \frac{1}{16}c^{2}(\gamma-1)^{2}\left(p_{s}-2\right){}^{2}\left(c\left((\gamma-3)p_{s}+2\right)+3p_{s}-2\right)\nonumber \\
 & \left(c\left(\gamma p_{s}+p_{s}-2\right)-p_{s}+2\right).
\end{align}
The last bracket can be rewritten as:
\begin{equation}
    c \left(\gamma  p_s+p_s-2\right)-p_s+2=c \gamma  p_s+(1-c) \left(2-p_s\right).
\end{equation}
It is clear that it is positive. Assume that in second to last bracket we have fixed $p_s$ and $\gamma.$ If 
\begin{equation}
    c<\frac{2-3 p_s}{\gamma  p_s-3 p_s+2},
    \label{eq::smacon}
\end{equation}
this bracket is negative. Since it is the only potentially negative term in Eq. \eqref{eq::ds0}, it provides a condition for the existence of an entangled output state. The denominator of Eq. \eqref{eq::smacon} remains positive for $p_s<\frac{2}{3}.$ As $p_s$ approaches $
\frac{2}{3}$ the right-hand side of Eq. \eqref{eq::smacon} can become arbitrarily small, meaning that, for $p_s$ close to $\frac{2}{3}$, only states with extremely weak initial entanglement can produce an entangled output. 

It is a known fact that maximally entangled states are not always optimal for entanglement distribution tasks \cite{Palnotmest}. In particular, Ref.~\cite{Streltsovdistribution} shows that, when using a single-qubit amplitude damping channel and measuring entanglement via negativity, states with minimal initial entanglement can sometimes outperform maximally entangled ones. Our findings extend these results by demonstrating that, in the bipartite setting, the ability to generate entanglement at the output may rely exclusively on states with minimal initial entanglement. We present the above result as a proposition.
\begin{prop}
    Assume that we have a channel $\Lambda^{\text{Depol}}_p\otimes\Lambda^{\text{AD}}_\gamma,$ where $\Lambda^{\text{Depol}}_p$ is a depolarizing channel of associated probability $p_s$ and $\Lambda^{\text{AD}}_\gamma$ is an amplitude damping channel of damping parameter $\gamma$ (see the text around \eqref{def:kraus_amplitude_damping} for the precise definitions). For fixed $p_s, \gamma$, there exists an input state so that we obtain an entangled output after the application of $\Lambda^{\text{Depol}}_p\otimes\Lambda^{\text{AD}}_\gamma$ only if the entanglement of the initial state satisfies $c<\frac{2-3 p_s}{\gamma  p_s-3 p_s+2}.$
    \label{prr:depol_amplitude_damping} In particular, for $p_s$ approaching $2/3$ from below, this result implies that only weakly entangled input state can lead to entangled output.
\end{prop}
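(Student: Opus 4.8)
The plan is to convert the existence question ``is there an input producing an entangled output?'' into a sign condition on a single polynomial, using that for two qubits entanglement coincides with non-positivity of the partial transpose. Since the output state lives on a two-qubit space, it is entangled if and only if it is NPT, and by the single-negative-eigenvalue property its partial transpose is positive semidefinite exactly when $\det\bigl((\rho^{\text{Depol,AD}}_f)^{T_B}\bigr)\ge 0$. Hence an entangled output for a given input exists if and only if this determinant is strictly negative, and the whole problem reduces to minimizing $\det\bigl((\rho^{\text{Depol,AD}}_f)^{T_B}\bigr)$ over all inputs and reading off when the minimum dips below zero.

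First I would use the channel symmetries to restrict the input to the canonical two-parameter form $\ket{\varphi}$ of Eq.~\eqref{eq::psidad}: the depolarizing channel commutes with every unitary and the amplitude-damping channel commutes with diagonal unitaries, so one may rotate the Schmidt vectors without changing the output entanglement, leaving only the geometric-entanglement parameter $c\in[0,\tfrac12]$ and the real parameter $s_1\in[0,1]$. I would then compute $\det\bigl((\rho^{\text{Depol,AD}}_f)^{T_B}\bigr)$ as an explicit polynomial in $c$ and $s_1$ (with $p_s,\gamma$ fixed). This $4\times4$ determinant of a parametrized matrix is the computational heart of the argument and the step most prone to algebraic error, so I would carry it out with a computer-algebra system and, crucially, keep the output fully factored; the whole proof hinges on the fact that it factors cleanly, which is not guaranteed a priori.

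Next I would minimize over $s_1$ for fixed $c$. Setting the $s_1$-derivative to zero yields the three candidate critical points of Eq.~\eqref{eq::detsol}; for $0<c\le\tfrac12$ the discriminant $(1-2c)^2(c-1)c\gamma$ is negative, so two of the roots are non-real, and the surviving root $c/(2c-1)$ is negative and therefore outside $[0,1]$. The minimum is thus attained on the boundary, and the sign of the difference $d_{s_1=0}-d_{s_1=1}$ in Eq.~\eqref{eq::ds0mds1}, which is manifestly non-positive on $c\in[0,\tfrac12]$, pins the minimizer to $s_1=0$. It then suffices to analyze $d_{s_1=0}$ in Eq.~\eqref{eq::ds0}.

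Finally, I would read off the sign of $d_{s_1=0}$ from its factorization. The prefactor $\tfrac{1}{16}c^2(\gamma-1)^2(p_s-2)^2$ is a product of squares and hence non-negative, and the last factor equals $c\gamma p_s+(1-c)(2-p_s)$, which is strictly positive; therefore the only factor that can change sign is $c\bigl((\gamma-3)p_s+2\bigr)+3p_s-2$. Because $d_{s_1=0}$ is the minimum over $s_1$, an entangled output for any $s_1$ at this $c$ forces $d_{s_1=0}<0$, i.e.\ this factor must be negative. Since for $p_s<\tfrac23$ the coefficient $\gamma p_s-3p_s+2$ is positive (it is smallest at $\gamma=0$, where it equals $2-3p_s>0$), dividing through preserves the inequality direction and yields exactly $c<\frac{2-3p_s}{\gamma p_s-3p_s+2}$, the claimed necessary condition; letting $p_s\uparrow\tfrac23$ drives the right-hand side to zero, which gives the weak-entanglement conclusion.
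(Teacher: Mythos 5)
Your proposal is correct and follows essentially the same route as the paper: reduce the input to the two-parameter form $\ket{\varphi}$ via the channel symmetries, use the sign of $\det\bigl((\rho^{\text{Depol,AD}}_f)^{T_B}\bigr)$ as the two-qubit entanglement criterion, show the interior critical points in $s_1$ are unphysical so the minimum sits at $s_1=0$, and read off the condition on $c$ from the single sign-changing factor of $d_{s_1=0}$. Your added remark that the denominator $\gamma p_s-3p_s+2$ is minimized at $\gamma=0$ is a slightly more explicit justification of its positivity for $p_s<2/3$ than the paper gives, but the argument is otherwise identical.
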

Previously, we used the determinant to identify the range of initial entanglement values that can lead to entangled output states. While the determinant is a useful tool in the qubit-qubit scenario, it is not a proper entanglement measure \cite{Augusiakdeterminant}. In what follows, we aim to determine the input state that maximizes the output negativity.
We observe that the determinant is minimized for input states with $s_1 = 0$. However, such states do not necessarily maximize the negativity of the output state. To find the optimal input state, we first used Proposition~\ref{pr::sdprelaxationchannel} to obtain a lower bound on the minimal eigenvalue of the partial transpose of the output state. We then found an input state—characterized by specific values of $s_1$ and $c$—for which the output state, after application of the channel, has the minimum eigenvalue of its partial transpose reaching this lower bound. Since this lower bound is reached for this particular state, it implies that the identified input state is the one leading to the lowest eigenvalue for the partial transpose of the output state. Therefore, it is the input state that maximizes the negativity of the output state.

 We performed the computations as follows: First, we fixed the channel parameters $p_s$ and $\gamma$. Then, we discretized the parameter space of the initial state by sampling $s_1$ and $c$ in steps of $0.01$, starting from $0$ until 1, thereby generating a $100 \times 100$ grid of candidate states. For each point in this grid, we computed the channel output and evaluated the minimal eigenvalue of its partial transpose.
Next, we compared these eigenvalues with the lower bound obtained via Proposition~\ref{pr::sdprelaxationchannel} computed through SDP. We identified a pair $s_1,$ $c$ for which the minimal eigenvalue of the output state's partial transpose matched the lower bound.

 After this, we varied the channel parameters and repeated the procedure described in the previous paragraph. Specifically, we explored a grid of channel parameter values defined by $p_s = 0.01n$ for $n = 0, 1, \dots, 67$ and $\gamma = 0.01n$ for $n = 0, 1, \dots, 99$. Our results indicate that the maximum output negativity is consistently achieved for $s_1 = 0$, while the optimal value of $c$ depends on the specific channel parameters (see Fig.~\ref{fig:depadipl}). This figure shows how the output negativity changes with the initial entanglement for different strength of damping parameter. Fig. \ref{fig:contourminimalentanglement} in turn, shows how the initial entanglement of the optimal input states varies with the noise parameters. Notably, as $p_s$ approaches $\frac{2}{3}$, states with minimal initial entanglement become optimal (see Fig.~\ref{fig:contourminimalentanglement}). In summary, we obtain maximum negativity for the input state of the form:
\begin{equation}
    \sqrt{c}\ket{01}+\sqrt{1-c}\ket{10},
    \label{eq::dami}
\end{equation}
for some $c\in [0,0.5].$ 
\begin{figure}
    \includegraphics[width=0.88\linewidth]{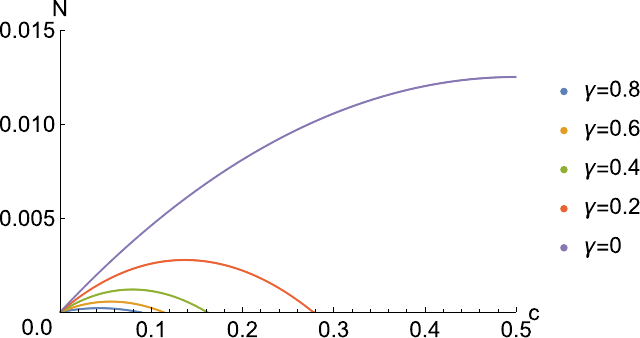}
    \caption{Plot of the maximal negativity of the depolarizing - amplitude damping channel as a function of initial entanglement for different amplitude damping parameters. The depolarizing parameter is constant and equal to $p_s=0.65.$}
    \label{fig:depadipl}
\end{figure}
\begin{figure}
    \includegraphics[width=0.9\linewidth]{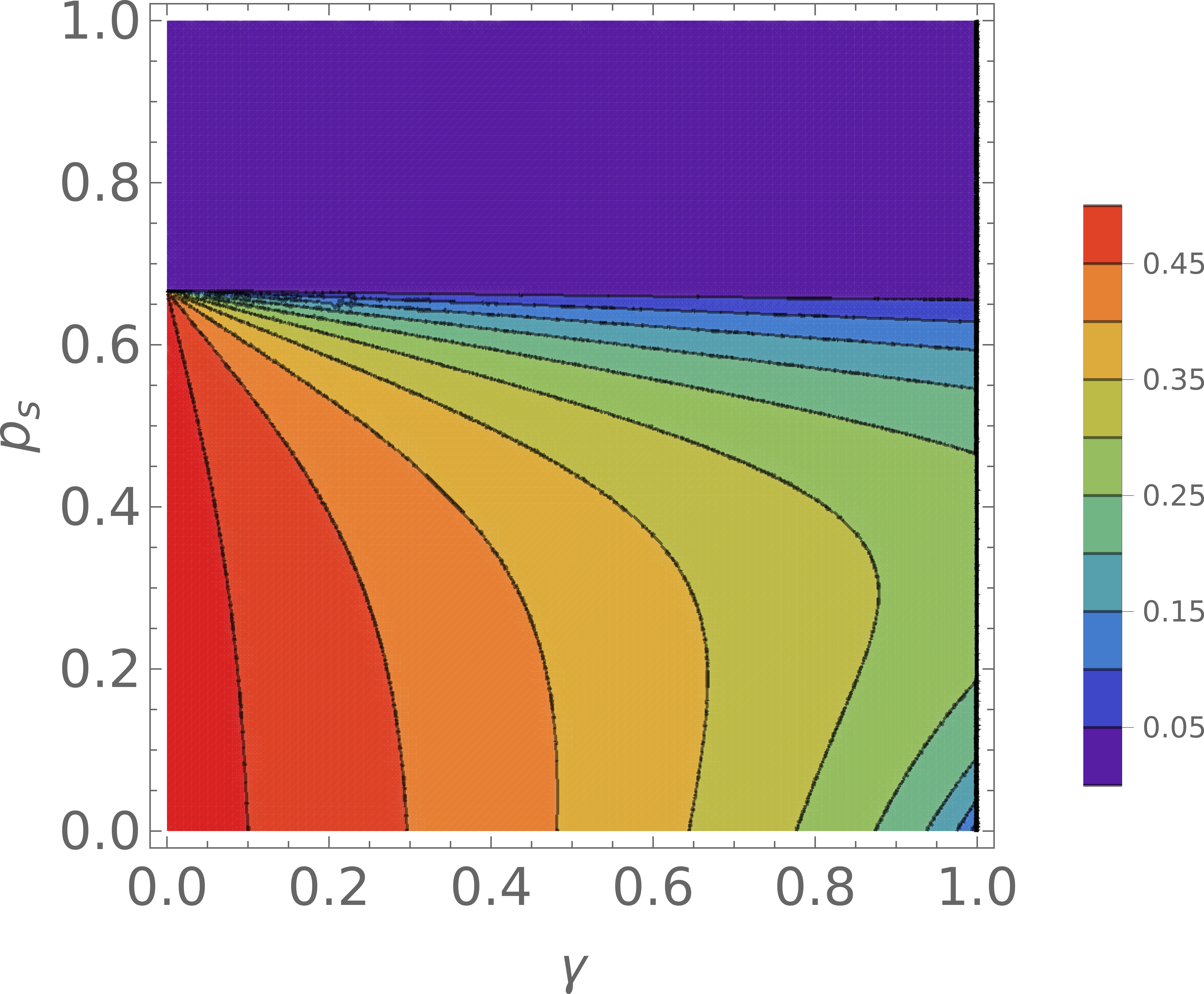}
    \caption{
    Initial geometric entanglement \( c \) of the input state \( \ket{\varphi} \) (defined in~\eqref{eq::psidad}) that maximizes the negativity of the output state \( \rho^{\text{Depol,AD}}_f \) (given in~\eqref{eq::outdad}). The vertical axis represents the depolarizing parameter \( p_s \), while the horizontal axis corresponds to the amplitude damping parameter \( \gamma \).
For strong depolarization (i.e., as \( p_s \) approaches \( \frac{2}{3} \)), input states with very low initial entanglement yield the most entangled output states, as measured by the negativity. In this regime, weakly entangled input states outperform maximally entangled ones in terms of output entanglement. For example, at fixed \( \gamma \), increasing \( p_s \) towards \( \frac{2}{3} \) requires the input state to have decreasing initial entanglement in order to optimize the output.
The purple region in the upper part of the plot indicates that the output state is always separable for \( p_s \geq \frac{2}{3} \), as the depolarizing channel becomes entanglement-breaking in this regime.
    }
    \label{fig:contourminimalentanglement}
\end{figure}

\subsection{Depolarizing-phase flip}
Now let us consider the depolarizing-phase flip channel. The Kraus operators of the phase flip channel are given by: $K_0^{\text{Phase-Flip}}=\sqrt{r}\openone$ and $K_1^{\text{Phase-Flip}}=\sqrt{1-r}\sigma_z,$ where $\sigma_z$ is the Pauli $z$ matrix. As before, $p_s\in[0,1]$ denotes the depolarizing parameter and $r\in[0,1]$ denotes the phase flip parameter.
 Since the Kraus operators of the phase flip channel are diagonal, the channel commutes with diagonal unitaries. As a result, we can express the initial state using Eq. \eqref{eq::psidad}.
  Let $\rho_f^{\text{Depol,Phase-Flip}}$ denote the output of the channel.
\begin{equation}
    \rho_f^{\text{Depol,Phase-Flip}}=\Lambda^{\text{Depol}}_p\otimes\Lambda^{\text{Phase-Flip}}_r(\ket{\varphi}\bra{\varphi}),
    \label{eq::outdpf}
\end{equation}
where $\Lambda^{\text{Depol}}_p$ stands for the depolarizing channel and $\Lambda^{\text{Phase-Flip}}_r$ stands for the phase flip channel. 
Using Wolfram Mathematica we analytically found $\det((\rho_f^{\text{Depol,Phase-Flip}})^{T_B})$. As the resulting expression is lengthy, we choose not to display it explicitly here. We then solve $\frac{d}{ds_1}\det((\rho_f^{\text{Depol,Phase-Flip}})^{T_B})=0$ for $s_1$ and we obtain three different solutions:
\begin{equation}
    \begin{split}
        &s_1= \frac{1}{2}\\
        &s_1= \frac{1}{2} \left(1+\frac{\sqrt{(1-r) r \left((1-c) c (1-2 r)^2+(1-r)
   r\right)}}{(1-2c) (1-r) r}\right)\\
   &s_1= \frac{1}{2} \left(1-\frac{\sqrt{(1-r) r \left((1-c) c (1-2 r)^2+(1-r)
   r\right)}}{(1-2c) (1-r) r}\right).
    \end{split}
    \label{eq::depffsol}
\end{equation}
Let us focus on the expression:
\begin{equation}
    F(c) \equiv \frac{\sqrt{(1-r) r \left((1-c) c (1-2 r)^2+(1-r)
   r\right)}}{(1-2c) (1-r) r}.
\end{equation}
 We have $F(c) \geq F(0)$ (we recall $c \in [0,0.5]$). Therefore:
\begin{equation}
    \frac{\sqrt{(1-r) r \left((1-c) c (1-2 r)^2+(1-r)
   r\right)}}{(1-2c) (1-r) r}\geq\frac{\sqrt{(1-r)^2 r^2}}{(1-r) r}=1.
   \label{eq::dffsol}
\end{equation}
Using this we see immediately that the second solution in Eq.~\eqref{eq::depffsol} is greater than 1, while the third solution is smaller than 0. Thus, given the fact $s \in [0,1]$, $\det((\rho_f^{\text{Depol,Phase-Flip}})^{T_B})$ can reach its minimum at $s_1=0,$ $s_1=\frac{1}{2}$ or $s_1=1.$ $\det((\rho_f^{\text{Depol,Phase-Flip}})^{T_B})$ takes the same value for $s_1=0$ and $s_1=1,$ meaning that we need to check only $s_1=0$ and $s_1=1/2$.

Let $d_{s_1=1}$ be the value of $\det((\rho_f^{\text{Depol,Phase-Flip}})^{T_B})$ for $s_1=1$ and $d_{s_1=1/2}$ be its value for $s_1=1/2$. We have: 
\begin{equation}
\begin{split}
    &d_{s_1=1}-d_{s_1=1/2}=-\frac{1}{16} (1-2 c)^2 (1-r) r \\&\left((1-2 c)^2 (1-r) r+2 (1-c)
   c\right) 
   \left(p_s-2\right){}^2 p_s^2.
\end{split}   
\end{equation}
All the terms in brackets in the above equation are positive, making the right-hand side of this equation negative. Consequently, $d_{s_1=1}\leq d_{s_1=1/2}.$ Therefore, the determinant $\det((\rho_f^{\text{Depol,Phase-Flip}})^{T_B})$ reaches its minimum at $s_1=0$ and $s_1=1.$ We have:
\begin{equation}
\begin{split}
   & d_{s_1=1}=d_{s_1=0}=-\frac{1}{16} (c-1)^2 c^2 \left(p_s-2\right){}^2 \left(4 r
   \left(p_s-1\right)-3 p_s+2\right)\\
   &\left(4 r
   \left(p_s-1\right)-p_s+2\right).
\end{split}
\end{equation}
Only the last two brackets can be negative, considering only these two brackets we obtain the following condition for $d_{s_1=1}=0:$
\begin{equation}
    \begin{split}
        &r= \frac{3 p_s-2}{4 \left(p_s-1\right)}\\
        &r= \frac{p_s-2}{4 \left(p_s-1\right)}.
        \label{depfffsol}
    \end{split}
\end{equation}
These two solutions define two lines in the $r$-$p_s$ plane. In the region between them, the output is separable for all input states. Outside this region, there exist initial states that produce entangled output (see Fig.~\ref{fig:depffp}). Note that our result does not depend on the initial geometric entanglement, $c$. Thus, even input states with an arbitrary small amount of entanglement can produce an entangled output.

Numerical computations show that the maximal output negativity is always achieved with a maximally entangled input state, i.e., $c=1/2$. The computations were performed in a similar manner as what is described in the paragraph following Proposition \ref{prr:depol_amplitude_damping}. 

\begin{figure}
    \includegraphics[width=0.9\linewidth]{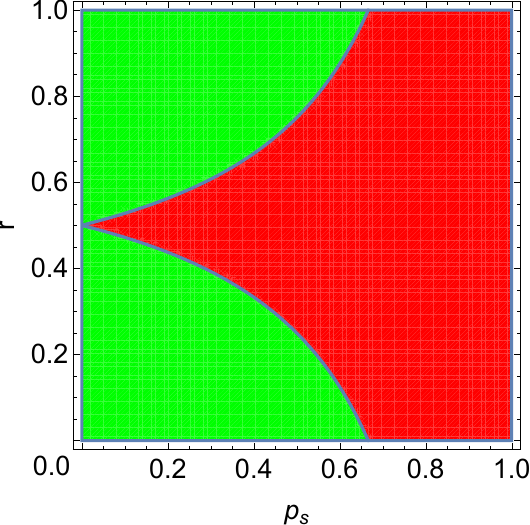}
    \caption{Schematic plot of the entanglement of the output of the channel in Eq.~\eqref{eq::parallel} for the depolarizing-phase flip channel. The red colored region is where the output of the channel is separable, while the green colored region corresponds to the existence of an entangled output.}
    \label{fig:depffp}
\end{figure}
\subsection{Amplitude damping-phase flip}
Now consider the amplitude damping - phase flip channel. Let $\gamma$ be a damping parameter and $r$ the phase flip parameter. Because both channels commute with diagonal unitaries, we can express the input state of the channel as (see Section \ref{s::da}):
\begin{equation}
    \ket{\psi}=\sqrt{c}\ket{s}\otimes\ket{v}+\sqrt{1-c}\ket{s^{\bot}}\otimes\ket{v^{\bot}},
    \label{eq::psiadf}
\end{equation}
where $\ket{s}=\sqrt{s_1}\ket{0}+\sqrt{1-s_1}\ket{1}$ and $\ket{s^{\bot}}=-\sqrt{1-s_1}\ket{0}+\sqrt{s_1}\ket{1}.$ Similarly, $\ket{v}=\sqrt{s_2}\ket{0}+\sqrt{1-s_2}\ket{1}$ and $\ket{v^{\bot}}=-\sqrt{1-s_2}\ket{0}+\sqrt{s_2}\ket{1}.$ The determinant of the partial transpose of the channel's output is a complicated expression. However, we can identify the range in which the output may be entangled by specifying particular values of the input state parameters. For $s_1=0$ and $s_2=0,$ the determinant is equal to: 
\begin{equation}
    (1-c)^2 c^2 (1-\gamma ) (\gamma -1) \left(2 r-1\right)^{2}.
\end{equation}
 The equation above is negative, unless $r=\frac{1}{2}$ or $\gamma=1.$ However, the phase flip channel is EB for $r=\frac{1}{2},$ while the amplitude damping channel is EB for $\gamma=1.$ Our channel cannot produce entangled output if one of the subchannels is EB. Thus, the channel is EA if and only if at least one of the subchannels is EB. 

We used $s_1=0$ and $s_2=0$ to determine the range of parameters for which the channel is entanglement-annihilating. However, this choice does not yield the maximal output negativity. Numerical computations—performed in a manner similar to that described in the paragraph following Proposition \ref{prr:depol_amplitude_damping}—indicate that the maximal output negativity is achieved for $s_1=0$ and $s_2=0.5,$ when $\gamma<0.8.$ For these parameters, the initial state is 
\begin{equation}
    \frac{1}{2}\left(\sqrt{1-c}\ket{00}-\sqrt{1-c}\ket{01}+\sqrt{c}\ket{10}+\sqrt{c}\ket{11}\right).
    \label{eq::apfl}
\end{equation}
Numerical computations further show that for $\gamma>0.8$ this state does not match with the minimal eigenvalue obtained from the SDP. This discrepancy arises because, in this case, the SDP does not provide the correct minimal eigenvalue. We can demonstrate this using the following procedure. First, we extract the matrix $X^{ABA'B'}$ which minimizes Eq.~\eqref{eq::propositionproof}. Then, following the method from \cite{Ganardilocalpurity}, we show that $X^{ABA'B'}$ is entangled in bipartition $AB|A'B'.$ This explains why we have a strict inequality in the Proposition~\ref{pr::sdprelaxationchannel} in this scenario. 
\begin{table*}[t] 
    \centering
    \begin{tabular}{c l c r c}
    \hline
    \hline
        Channel & Parameters & Optimal state & Optimal $c$ & Equality in Proposition \ref{pr::sdprelaxationchannel}\\
        \hline
        Depol-AD & $p_s,\gamma$ &Eq. \eqref{eq::dami} & analytical & all cases\\
        
        Depol-pf & $p_s,r$ &$\ket{\phi^{+}}$ & $\frac{1}{2}$ & all cases\\
        
        AD-pf  &  $\gamma,r$& Eq. \eqref{eq::apfl} & numerical & $\gamma<0.8$\\
        
        AD-bf  & $\gamma,r$ &Eq. \eqref{eq::apfl} & numerical & $\gamma<0.8$\\
        
        GAD-GAD & $n,\gamma$&Eq. \eqref{eq::GAD} & numerical & all cases\\
        \hline       
        \hline
    \end{tabular}
    \caption{The table summarizes various quantum channels along with the corresponding input states that maximize the output negativity. The final column indicates whether equality holds in Proposition~\ref{pr::sdprelaxationchannel}, and, if so, specifies the parameter values for which this occurs.
In the column labeled ``optimal \( c \)'', there are three possible cases. If an explicit value is given, it means that the optimal \( c \) is independent of the channel parameters. If the entry is marked as \emph{analytical}, it indicates that an analytical expression for the optimal \( c \) has been derived. If the entry is marked as \emph{numerical}, it means that no closed-form expression was found, and the value of \( c \) was obtained through numerical optimization.
The table was constructed using a method similar to the one described after Proposition~\ref{prr:depol_amplitude_damping}.
The following abbreviations are used: Depol – depolarizing channel, AD – amplitude damping, pf – phase flip, bf – bit flip, GAD – generalized amplitude damping.}
    \label{tab::sdp}
\end{table*}
\subsection{Pauli Channel}

In this section, we consider a channel composed of the tensor product of two different Pauli channels. The input state can always be expressed in its Schmidt decomposition form~\eqref{eq::schmidt}. Numerical simulations over 10,000 samples indicate that the minimal eigenvalue of the output state's partial transpose is consistently achieved when the input state has maximal geometric entanglement, corresponding to \( c = \frac{1}{2} \). This property is remarkable as we find it to be true for \textit{any} Pauli channel. Our numerical approach was performed in a similar manner as the one described after Proposition \ref{prr:depol_amplitude_damping}. Specifically, we discretized the range of each input state parameter and searched for an output state that reaches the bound from Proposition~\ref{pr::sdprelaxationchannel}. However, the orthonormal basis states appearing in the Schmidt decomposition~\eqref{eq::schmidt} may vary depending on the specific parameters of the channel. Consequently, the minimal eigenvalue is attained for a maximally entangled input state represented in a local basis, which depends on the considered Pauli channels. 

\subsection{Generalized amplitude damping channels}
In this section, we consider a noise modeled by local generalized amplitude damping (GAD) channels. These channels can describe a wide range of quantum noise effects. For instance, they are used to model decoherence in superconducting-circuit-based quantum computing \cite{ChirolliGADquantumcomputing} and to characterize losses in linear optical systems under low-temperature background noise \cite{ZouGADliop}. For a comprehensive review of GAD channels, see \cite{KhartiGADchannel}.

The Kraus operators of the GAD channel are given by:
\begin{equation}
    K_0^{GAD}=\sqrt{1-n}\left(
\begin{array}{cc}
 1 & 0 \\
 0 & \sqrt{1-\gamma}  \\
\end{array}
\right),
\label{eq:K0GAD}
\end{equation}
\begin{equation}
    K_1^{GAD}=\sqrt{1-n}\left(
\begin{array}{cc}
 0 & \sqrt{\gamma}  \\
 0 & 0 \\
\end{array}
\right),
\end{equation}
\begin{equation}
    K_2^{GAD}=\sqrt{n}\left(
\begin{array}{cc}
 \sqrt{1-\gamma}  & 0 \\
 0 & 1 \\
\end{array}
\right),
\end{equation}
\begin{equation}
    K_3^{GAD}=\sqrt{n}\left(
\begin{array}{cc}
 0 & 0 \\
 \sqrt{\gamma}  & 0 \\
\end{array}
\right),
\end{equation}
where $\gamma\in[0,1]$ is a damping parameter and $n\in[0,1]$ is an additional parameter of the channel which can have various interpretations depending on the situation the channel is describing. For example, if the channel describes interaction with a thermal environment, $n$ denotes the probability that the environment is not in its ground state \cite{MyattGAD}. For $n=0$ GAD reduces to the standard amplitude damping channel. Since GAD channel commutes with diagonal unitaries, we can write the initial state as Eq. \eqref{eq::psiadf}. We consider the scenario in which both GAD channels are identical. It was previously studied in \cite{Filippovea} and we will extend these results. We will find the range of parameters where GAD channel is EA. A criterion for which GAD channel is EB was established in \cite{LamiGAD,Filippovea}.

If both GAD channels are identical, the system is characterized by two channel parameters. To identify the input state that maximizes the output negativity, we employ a procedure similar to that described below Proposition~\ref{prr:depol_amplitude_damping}. Specifically, we discretize the range of each input state parameter and search for an output state that reaches the bound from Proposition~\ref{pr::sdprelaxationchannel}.
Numerical computations indicate that, in this scenario, the state 
\begin{equation}
    \sqrt{c}\ket{00} + \sqrt{1-c}\ket{11},
    \label{eq::GAD}
\end{equation}
for some \( c \in [0,1] \), minimizes the minimal eigenvalue of the output's partial transpose. However, numerical results suggest that, to determine whether such a channel preserves entanglement for some input state, it suffices to verify whether it preserves the entanglement of the maximally entangled state \( \ket{\Psi^{-}} = (\ket{01} - \ket{10})/\sqrt{2} \). We formulate this observation as a conjecture.
\begin{num}
    Assume that we have a channel in the form $\Lambda\otimes\Lambda,$ where $\Lambda$ is some GAD channel. Then if $\Lambda\otimes\Lambda(\ket{\psi}\bra{\psi})$ is entangled for some $\ket{\psi},$ $\Lambda\otimes\Lambda(\ket{\Psi^{-}}\bra{\Psi^{-}})$ is also entangled. 
\end{num}

In other words, if we want to check if the channel made of two identical copies of GAD is EA it is enough to check the separability of its application on $\ket{\Psi^{-}}$. To motivate this conjecture we do the following. First, we solve $\lambda_{min}(\Lambda\otimes\Lambda(\ket{\Psi^{-}}\bra{\Psi^{-}})^{T_B})=0$ for $\gamma.$ Here $\lambda_{min}(\rho)$ denotes the minimal eigenvalue of $\rho.$ We obtain the following result:
\begin{equation}
    \gamma=\frac{(n-1) \sqrt{\frac{-8 \left(\sqrt{2}-1\right) (n-1) n-2
   \sqrt{2}+3}{(n-1)^2 n^2}} n+\sqrt{2}-1}{4 (n-1) n}.
   \label{eq::gadgam}
\end{equation}
This implies that if $\gamma$ is given by the relation above and $n$ is a free channel parameter, the minimal eigenvalue of the partial transpose of $\Lambda\otimes\Lambda(\ket{\Psi^{-}}\bra{\Psi^{-}})$ is 0. Next, we sampled $n$ in steps of 0.0001 starting from 0 until $1$. $\gamma$ was given by Eq.~\eqref{eq::gadgam}. Then, we computed SDP from the Proposition \ref{pr::sdprelaxationchannel}. The result of the SDP was consistently zero up to $10^{-8}$. Therefore, the set of channel parameters for which $\Lambda \otimes \Lambda(\ket{\Psi^{-}}\bra{\Psi^{-}})$ transitions from being entangled to separable coincides with the point at which the channel becomes entanglement-annihilating (EA) (this last point being indicated by the fact the result of the SDP vanishes). Hence, the two notions align, as stated in our conjecture. Note that if we decrease the $\gamma$ parameter to a value below the right-hand side of \eqref{eq::gadgam}, $\Lambda\otimes\Lambda(\ket{\Psi^{-}}\bra{\Psi^{-}})$ becomes entangled, meaning the channel is no longer EA. Fig. \ref{fig::gadloc} shows regions where such channel is EA. The results are summarized in Table \ref{tab::sdp}.

\begin{figure}
    \centering
    \includegraphics[width=0.9\linewidth]{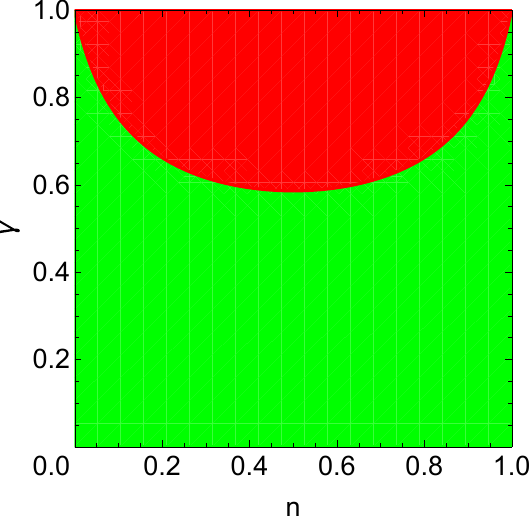}
    \caption{Schematic plot illustrating the entanglement properties of the output state of the channel~\eqref{eq::parallel}, where both subchannels are identical and correspond to the GAD channel defined around equation~\eqref{eq:K0GAD}. The red region indicates parameter values for which the channel output is separable for all input states. In contrast, the green region corresponds to parameter regimes where there exists at least one input state that yields an entangled output.
To determine whether the channel is entanglement-annihilating (EA) in this setting, it suffices to verify whether the singlet state \(\ket{\Psi^{-}}\) produces an entangled output. }
    \label{fig::gadloc}
\end{figure}

\section{\label{sec:Conclusions} Conclusions}
 We explored two strategies for distributing entanglement between two parties, focusing on the distribution of a two-qubit entangled state. For clarity, we refer to them as the first and second strategy. In the first strategy, the entanglement source is placed midway along the communication line (see Fig.\ref{fig::inkab}); in the second strategy, the source is located at one end of the channel (see Fig.\ref{fig::inkac}). We analytically showed that the first strategy outperforms the second when the channel co-located with the source is either unital or has Kraus rank strictly less than 4. More precisely, in the case of Kraus rank less than 4, the advantage holds for almost all channels (see Proposition~\ref{prr:middle_better_analytical} for the precise meaning of "almost all"). For channels with Kraus rank 4, our analytical methods do not yield a definitive conclusion. To address this case, we performed a numerical analysis, from which we conjecture that placing the entanglement source in the middle of the communication line, that is, following the first strategy, is always the better option.

Subsequently, we focused our analysis on the first strategy. Our goal was to find the range of channel parameters for which entanglement can be distributed, for various typical channels. We first demonstrated that the SDP introduced in Proposition~\ref{pr::sdprelaxationchannel} provides a lower bound on the minimal eigenvalue of the output state's partial transpose (the output state being the state received at the end of the entanglement distribution protocol). This bound also often happens to be tight in the sense that it is achieved for specific input states produced by the entanglement source. The interest of this bound is that, when tight, it provides a valuable tool for identifying the parameter ranges that enable entanglement distribution by the channels under study, thanks to the PPT criterion. However, in the case of amplitude damping combined with phase flip noise, we could show that the bound is not always tight for specific values of the channel parameters. 

Finally, we came to the surprizing result that for a combination of depolarizing and amplitude damping noise, in some parameter range for the depolarizing channel, only input states with arbitrarily small initial entanglement can produce entangled outputs. In different terms, strongly entangled input states become separable after the channel's action. It means that it is detrimental to the entanglement distribution task if the source produces input states containing too much entanglement.

\section*{Acknowledgments}
We thank Tulja Varun Kondra for discussion. This work was supported by the National Science Centre Poland (Grant No. 2022/46/E/ST2/00115) and within the QuantERA II Programme (Grant No. 2021/03/Y/ST2/00178, acronym ExTRaQT) that has received funding from the European Union's Horizon 2020 research and innovation programme under Grant Agreement No. 101017733, and the European Union's Horizon Europe Research and Innovation programme under the Marie Sklodowska-Curie Actions \& Support to Experts programme (MSCA Postdoctoral Fellowships) - grant agreement No. 101108284.

\bibliography{biblio}
\end{document}